\numberwithin{equation}{section}
\numberwithin{figure}{section}
\theoremstyle{plain}
\newtheorem{thm}{\protect\theoremname}[section]
\theoremstyle{definition}
\newtheorem{defn}[thm]{\protect\definitionname}
\theoremstyle{remark}
\theoremstyle{plain}
\newtheorem{lem}[thm]{\protect\lemmaname}
\newtheorem*{lem*}{Lemma}
\theoremstyle{remark}
\theoremstyle{plain}
\newtheorem{corollary}[thm]{\protect\corollaryname}
\theoremstyle{plain}
\newtheorem{proposition}[thm]{\protect\propositionname}
\newtheorem{example}[thm]{\protect\examplenname}
\newtheorem{thmApp}{\protect\theoremname}[subsection]
\theoremstyle{plain}
\newtheorem{propositionApp}[thmApp]{\protect\propositionnameApp}
\providecommand{\propositionnameApp}{Proposition}
\providecommand{\claimname}{Claim}
\providecommand{\definitionname}{Definition}
\providecommand{\lemmaname}{Lemma}
\providecommand{\remarkname}{Remark}
\providecommand{\theoremname}{Theorem}
\providecommand{\corollaryname}{Corollary}
\providecommand{\propositionname}{Proposition}
\providecommand{\examplenname}{Example}
\newcommand{\CN}{\mathbb{C}^N}
\newcommand{\CW}{\mathbb{C}^W}
\newcommand{\SV}{\left\vert S\right\vert}
\newcommand{\I}{\iota}
\newcommand\R{\mathbb{R}}
\newcommand\C{\mathbb{C}}
\newcommand\neta{\eta}
\newcommand\rev[1]{{\color{black}#1}}
\begin{document}

\title{Blind Phaseless Short-Time Fourier Transform Recovery}

\author[1]{Tamir Bendory}
\author[2]{Dan Edidin}
\author[3]{Yonina C. Eldar}

\affil[1]{The Program in Applied and Computational Mathematics, Princeton University, Princeton, NJ, USA}
\affil[2]{Department of Mathematics, University of Missouri, Columbia, Missouri, USA}
\affil[3]{The Andrew and Erna Viterbi Faculty of Electrical Engineering, Technion - Israel Institute of Technology, Haifa, Israel}

\maketitle

\begin{abstract}

\normalsize The problem of recovering a pair of signals from their blind phaseless short-time Fourier transform measurements arises in several important  phase retrieval applications, including ptychography and ultra-short pulse characterization. 
In this paper, we prove that in order to determine \rev{a pair of generic} signals uniquely, up to  trivial ambiguities, the number of phaseless measurements one needs to collect is, at most, five times the number of parameters \rev{required to} describe the signals.
 This result improves significantly upon previous papers, which required the number of measurements to be quadratic in the number of parameters rather than linear. 

In addition, we consider the simpler problem of recovering a pair of \rev{generic} signals from their blind short-time Fourier transform, 
when the phases are known. In this setting, which can be understood as a special case of the blind deconvolution problem, we show that the number of measurements required to determine the two signals, up to trivial ambiguities, equals exactly  the number of parameters to be recovered. 

As a side result, we study the classical phase retrieval problem---that is, recovering a signal from its Fourier magnitudes---when some entries of the signal are known a priori. 
We derive a bound on the number of required measurements as a function of the size of the set of known entries. 
Specifically, we show that if  most of the signal's entries are known, then only a few Fourier magnitudes are necessary to determine a signal uniquely.
\end{abstract}

\begin{IEEEkeywords}
\normalsize	\noindent phase retrieval,   blind deconvolution, ptychography, short-time Fourier transform, ultra-short pulse characterization, FROG
\end{IEEEkeywords}

\section{Introduction}

Phase retrieval is the problem of recovering a signal from its Fourier magnitudes. This problem arises in a variety of applications and scientific fields, such as X-ray crystallography~\cite{harrison1993phase,giacovazzo2002fundamentals}, optical imaging~\cite{walther1963question,shechtman2015phase}, ultra-short pulse characterization~\cite{trebino2012frequency}, astronomy~\cite{fienup1987phase} and signal processing~\cite{baykal2004blind,rabiner1993fundamentals,bendory2017bispectrum}.  For recent surveys from a signal processing perspective; see~\cite{bendory2017fourier,shechtman2015phase,jaganathan2015phase}.

Evidently,  there are infinitely many signals with the same Fourier magnitudes.  Therefore, to make the problem well-posed---that is, having a unique mapping between the Fourier magnitudes and the signal---some additional information on the signal must be harnessed.  In many applications it is common to assume prior knowledge on the structure of the underlying signal. For instance, in crystallography the signal is  sparse~\cite{harrison1993phase,giacovazzo2002fundamentals,elser2017benchmark}. If the signal is known to be {of} minimum phase, then there is a unique mapping between the Fourier magnitudes and the signal~\cite{huang2016phase}.
Other useful assumptions are that the signal has nonzero entries only in a known region or that its entries are nonnegative.

We focus in this paper on an alternative strategy to enforce uniqueness by collecting additional phaseless measurements. One important example, which serves as the the main motivation for this paper, is  \emph{ptychography}~\cite{nellist1995resolution,rodenburg2008ptychography,maiden2011superresolution,bendory2018non}. In ptychography, the specimen (i.e., signal, image or volume) is scanned by a localized illumination beam and Fourier magnitudes of overlapping \rev{windowed measurements} are recorded. 
Another  popular technique that collects multiple measurements is Frequency-Resolved
 Optical Gating (FROG), which is used to  characterize ultra-short laser pulses~\cite{trebino2012frequency}. In FROG, the
Fourier magnitudes of the product of the signal with a shifted version of itself are recorded, for  several  shifts. 
An extension, called \emph{blind FROG}, can be used to characterize two signals simultaneously by measuring the Fourier magnitudes of the product of one signal with a shifted version of the other.

In ptychography and blind FROG, the collected data can be modeled as the  
phaseless blind short-time Fourier transform (STFT) with respect to two signals. A detailed mathematical model is provided in Section~\ref{sec:model}.
In ptychography, the two signals are the specimen and the illumination beam. In blind FROG, the signals are the two optical pulses to be characterized and the blind phaseless STFT measurements are usually called {the} blind FROG trace. In those applications, a variety of algorithms have been suggested  to estimate both signals simultaneously~\cite{thibault2009probe,maiden2009improved,guizar2008phase,kane2008principal,trebino2012frequency}. However, these methods lack   theoretical analysis.
In this paper, we study the question of uniqueness, characterize the trivial ambiguities of the problem and derive bounds on the number of measurements required to determine the 
 two signals uniquely, up to these ambiguities.
 
We begin by presenting the trivial ambiguities of the phaseless blind STFT problem  in Proposition~\ref{prop:symmetries_phaseless}. 
The main result of this paper is Theorem~\ref{thm:phaselessblindstft}. It shows that the number of measurements required to uniquely determine a generic pair of signals, up to trivial ambiguities, is (at most) five times the number of parameters that describe the signals.
This result significantly improves upon previous results~\cite{bendory2017uniqueness} that required the number of measurements to be quadratic in the number of parameters rather than linear. 

We also consider a simpler problem  of recovering two signals from their blind STFT, when the phases are assumed to be known. This problem can be understood as a special case of the \emph{blind deconvolution} problem~\cite{jefferies1993restoration,shalvi1990new,abed1997blind,ayers1988iterative,michaeli2014blind}; see {the} discussion in Section~\ref{sec:model}. 
We show that the number of measurements required to determine the two signals uniquely, up to scaling ambiguities, is optimal.  That is, the number of measurements equal exactly the number of parameters to be recovered; see Theorem~\ref{thm:uniqueness_withphases}. The dimension of the ambiguity group is inversely proportional to the overlap between adjacent sections. Hence, a small overlap  results in a large  scaling ambiguities group.
The technical details are presented and discussed in Section~\ref{sec:main_results}, while proofs are provided in Section~\ref{sec:proofs} and Appendix~\ref{sec:proof_step3uniqueness}. Some of the proofs require basic definitions in group theory, which are  summarized in Appendix~\ref{sec:groups}. 

As a side result of this work, we study in Section~\ref{sec:partial_known_signal}  the classical phase retrieval problem---that is, recovering a signal from its Fourier magnitudes---when some entries of the signal are known.  
In~\cite{beinert2017enforcing}, Beinert and Plonka showed that one entry of the signal together with its Fourier magnitudes determine almost all signals uniquely. In Proposition~\ref{prop:limited}, we extend this result by bounding the number of required phaseless Fourier measurements   when several entries are known.
In particular, we show that if the number of unknown entries is relatively small, then one needs to collect only few Fourier {intensity} measurements. 
We successively use this result  to determine the pair of  signals in a specified section of the phaseless blind STFT measurements   based on the Fourier magnitudes  of this section and some of the signals' entries.  
These known entries are shared with an adjacent overlapping section  whose entries were determined previously.

\section{Problem formulation} \label{sec:model}
 
The phaseless blind STFT  of a signal $x\in\CN$ with respect to a  window $w\in\CW$, for some $N\geq W$, is given by
\begin{equation} \label{eq:phaseless_blind_stft}
\begin{split}
\left\vert \hat{y}[k,m]\right\vert &= \left\vert\sum_{n=0}^{N-1} x[n]w[mL-n]e^{-2\pi\I kn/N}\right\vert
\\ &=\left\vert x[0] w[mL] + \eta_k x[1] w[mL-1] +
\ldots + \eta_k^{N-1}x[N-1]w[mL-(N-1)]\right\vert,
\end{split}
\end{equation} 
where $\eta_k:=e^{-2\pi\I k/N}$. Here, $0<L<W$ is the step size.
We assume that the signals are zero outside their support so that $w[n]=0$ for all $n\notin[0,\ldots W-1]$ and $x[n]=0$ for  $n\notin[0,\ldots N-1]$. 
The ratio between $W$ and $L$ determines the 
number of short-time sections which is given by $M:=\left\lceil(N+W-1)/L \right\rceil.$ 
The goal is derive the number of frequencies required per window in order
to determine $x$ and $w$ from $\left\vert\hat{y}[k,m]\right\vert$. 
Note that if $W= L$ then  there is no overlap between adjacent sections, and the problem reduces to  the standard phase retrieval problem.

The problem of recovering a signal from its phaseless STFT---when the window $w$ is assumed to be known---was studied thoroughly in recent years~\cite{jaganathan2016stft,eldar2015sparse,bendory2018non,pfander2016robust,iwen2016fast,xu2018accelerated}. However, in ptychography, which is the prime motivation of these papers, the precise structure of the window is usually unknown a priori and thus standard algorithms in the field  optimize over the signal and the window simultaneously~\cite{thibault2009probe,maiden2009improved,guizar2008phase}. 

The main result of this paper, presented in Theorem~\ref{thm:phaselessblindstft}, shows that it suffices to consider less than $10L$ Fourier frequencies per window  to determine 
 a pair of generic signals $(x,w)$. Therefore, in total,  we require $10LM\approx10(N+W)$ phaseless measurements.
 This result is near optimal in the sense that  the number of parameters to be recovered is $2(N+W)$: the real and imaginary parts of the signal and the window. We mention that our result does not hold for the special case when $x=w$ as the problem appears in the FROG setup~\cite{trebino2012frequency}. The latter case was investigated in~\cite{bendory2017signal}, where it was shown that the number of measurements required to determine the (single) signal is three times its bandwidth.  
In Section~\ref{sec:main_results}, we provide a more comprehensive comparison with related results in the literature.  

We also explore the simpler case of blind STFT when the phases are assumed to be known. In this problem, the goal is to determine $x\in\CN$ and $w\in\CW$ from 
\begin{equation} \label{eq:blind_stft}
 \hat{y}[k,m] = \sum_{n=0}^{N-1} x[n]w[mL-n]e^{-2\pi\I kn/N}.
\end{equation} 
For generic $(x,w)$, we show that if $W>L$
then it suffices  to consider only $2L$ Fourier measurements per window to determine $x$ and $w$, up to unavoidable $L$ scaling ambiguities presented in Proposition~\ref{prop:symmetries_phase}. 
Therefore, $2ML\approx 2(N+W)$ measurements are enough. However, in Theorem~\ref{thm:phaselessblindstft} we prove by a more  careful examination  that the number of measurements can be reduced to only $N+W-L$ Fourier measurements  if we choose the  measurements properly for each window. This result is optimal as it captures precisely the number of parameters to be determined.
 
The model of blind STFT~\eqref{eq:blind_stft} can be understood as a special case of the \emph{blind deconvolution} problem. In particular, let us denote by $\hat{y}_{k}$ all entries of $\hat{y}$ for fixed $k$. Then, we can write~\eqref{eq:blind_stft} as 
\begin{equation*}
\hat{y}_{k} = x_k\ast w,
\end{equation*}
where $x_k[n]:=x[n]e^{-2\pi\I kn/N}$ and $\ast$ denotes convolution. 
Blind deconvolution is a fundamental problem in a variety of applications, including  astronomy, communication, image deblurring, system identification, optics and \rev{structural biology}; see~\cite{jefferies1993restoration,shalvi1990new,abed1997blind,ayers1988iterative,michaeli2014blind,bendory2018toward,bendory2019multi} to name a few.
Following advances in related fields like compressed sensing and phase retrieval, many papers have focused on establishing theoretical foundations for different settings of the blind deconvolution problem; see for instance~\cite{ahmed2014blind,li2017identifiability,kech2017optimal,ling2017blind,zhang2018structured}. In particular,~\cite{kech2017optimal} provides a thorough analysis of bilinear problems in general and, in particular, blind deconvolution.
These works assume a low-dimensional structure of the signals to enforce uniqueness. Inspired by some phase retrieval and channel estimation applications, it was shown in~\cite{walk2017blind,walk2017stable} that a pair of signals  with known autocorrelations can be recovered from their blind deconvolution by a convex program.
Our model deviates from the papers mentioned above as it relies on overlapping  windows to derive uniqueness for generic signals with optimal number of measurements.
	
\section{Main results} \label{sec:main_results}

We begin by studying the symmetries of the blind STFT map~\eqref{eq:blind_stft}, often called \emph{trivial ambiguities}. Next, 
we derive the number of  measurements required to determine a pair of \emph{generic}  signals  $(x,w)$.
By generic signals, we mean the following:
\begin{defn} \label{def:generic}
	When saying that a generic signal is uniquely determined by a collection of \rev{polynomial} measurements we mean that the set of signals which cannot be determined by these measurements lies in the vanishing locus of a nonzero polynomial\footnote{Given a  nonzero polynomial $f$ in $T$ variables (real or complex), the vanishing locus of $f$ is the set 
	$V(f) = \{(x_1, \ldots, x_T) \in \R^T (\text{resp. } \C^T) \vert f(x_1,\ldots x_T)=0\}$. By a general result in algebraic geometry $\dim V(f) < T$. As a consequence, the complement of $V(f)$ in $\R^T$ (resp. $\C^T$) is dense.}. In particular, this means that we can recover almost all signals with the given measurements.
\end{defn}

\rev{We remark that while almost all signals are generic, Definition~\ref{def:generic} does not cover  several classes of signals which might be important in some applications. A notable example is the class of sparse signals (signals with only few nonzero entries). 
For instance, the following pair of signals $(x_1,w_1)$ and $(x_2,w_2)$ share the same blind phaseless STFT~\eqref{eq:phaseless_blind_stft} (this example is taken from~\cite{choudhary2014fundamental}): 
\begin{align*}
x_1 =& [1, 0, 1, 0, 0, 0, 0, 0, 1, 0, 1],  &w_1 = [1,0,0,0,1,0,0], \\
x_2 =& [1, 0, 0, 0, 0, 0, 0, 0, 1, 0, 0],  &w_2 = [1,0,1,0,1,0,1]. 
\end{align*}
Further ambiguities for sparse signals when the window is known are discussed in~\cite{eldar2015sparse}. 
}

Since the window $w\in\CW$ is unknown,  the group of ambiguities of the map~\eqref{eq:blind_stft} is large, although the phases of the blind STFT are known. 
This group of ambiguities can be thought of as   \emph{scaling ambiguities} between the signal and the window  as formulated in the following proposition. 
\begin{proposition} \label{prop:symmetries_phase} 
	Let $\lambda:=(\lambda_0, \ldots, \lambda_{L-1})$ be complex, nonzero, numbers. 
	Then, the following action preserves the blind STFT measurements~\eqref{eq:blind_stft}:
	\begin{eqnarray*}
		\lambda\circ (x,w)  := 
		\left((\lambda_0 x[0], \ldots , \lambda_{L-1} x[L-1],
		\lambda_0 x[L] ,\ldots  ), ( \lambda^{-1}_0 w[0], \lambda^{-1}_{L-1} w[1], \ldots
		\lambda^{-1}_1 w[L-1], \lambda_0^{-1} w[L], \ldots )\right).
	\end{eqnarray*}
\end{proposition}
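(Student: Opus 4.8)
The plan is to verify directly that applying the action $\lambda\circ(\cdot)$ leaves every complex measurement $\hat y[k,m]$ in~\eqref{eq:blind_stft} invariant, so that no magnitudes even need to be taken. First I would make the indexing explicit, since the statement writes the scaling out entry by entry rather than as a formula. Reading off the definition, the action scales the signal by $x[n]\mapsto \lambda_{n\bmod L}\,x[n]$, while it scales the window by $w[j]\mapsto \lambda_{(-j)\bmod L}^{-1}\,w[j]$. I would confirm this reading against the first few window indices displayed in the proposition ($w[0]\mapsto\lambda_0^{-1}w[0]$, $w[1]\mapsto\lambda_{L-1}^{-1}w[1]$, \ldots, $w[L-1]\mapsto\lambda_1^{-1}w[L-1]$, $w[L]\mapsto\lambda_0^{-1}w[L]$), checking that the reciprocal subscript attached to $w[j]$ is indeed $(-j)\bmod L$ and that it cycles with period $L$.

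With the indexing pinned down, the heart of the argument is a term-by-term cancellation inside the sum defining $\hat y[k,m]$. Each summand is a product $x[n]\,w[mL-n]\,e^{-2\pi\I kn/N}$, and the exponential factor is untouched by the action. Under the scaling, $x[n]$ acquires the factor $\lambda_{n\bmod L}$, while $w[mL-n]$ acquires $\lambda_{(-(mL-n))\bmod L}^{-1}$. The key observation is that $mL\equiv 0\pmod L$, so $(-(mL-n))\bmod L=(n-mL)\bmod L=n\bmod L$. Hence the window factor contributes exactly $\lambda_{n\bmod L}^{-1}$, which cancels the $\lambda_{n\bmod L}$ coming from the signal. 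Every summand is therefore left unchanged, and so is the entire sum; since this holds for every $k$ and every $m$, the blind STFT data is preserved. A short sanity check with, say, $L=3$ and $m=1$ (where $mL=3$ and the pairs $x[0]w[3]$, $x[1]w[2]$, $x[2]w[1]$, $x[3]w[0]$ each cancel) confirms the bookkeeping.

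The only place requiring genuine care—and the step I expect to be the main obstacle—is tracking the modular residues: one must verify that the subscript pattern written for $w$ really realizes the map $j\mapsto(-j)\bmod L$, and that the coincidence $mL\equiv 0\pmod L$ is precisely what forces the two residues to agree. Once the indexing is settled, the computation is a one-line cancellation with no analytic content. The hypothesis that each $\lambda_i$ is nonzero is used only to guarantee that the map is invertible, so that the collection of such actions forms a genuine group (isomorphic to $(\C^\times)^L$) acting on pairs $(x,w)$ rather than a mere semigroup—consistent with the group-theoretic framing recalled in Appendix~\ref{sec:groups}.
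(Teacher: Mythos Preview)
Your argument is correct and essentially identical to the paper's own proof: both identify the action as $x[n]\mapsto\lambda_{n\bmod L}x[n]$ and $w[j]\mapsto\lambda^{-1}_{(L-j)\bmod L}w[j]$ (equivalently $\lambda^{-1}_{(-j)\bmod L}$), and then observe that in each summand $x[n]w[mL-n]$ the two scalar factors cancel because $mL\equiv 0\pmod L$. The paper's version is terser, omitting your sanity checks, but the content is the same term-by-term cancellation.
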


\begin{proof}
	 Note that  
	$\lambda$ acts on $x$ by multiplication by $\lambda_{n\mod L}$ and on $w$ by multiplication by $\lambda^{-1}_{(L-n) \mod L}$. 	
	Thus, every term in the sum for $\hat{y}[k,m]$ is preserved under the action of
	$\lambda$ since
	$x[n] w[mL -n]$ is transformed into $(\lambda_{n \mod L}) x[n] (\lambda^{-1}_{n \mod L}) w[mL -n]$.	
\end{proof}
\noindent Note that Proposition~\ref{prop:symmetries_phase} can be understood as an action of a group  $(\C^\times)^L$ on $\C^N \times \C^W$.

Next, we are ready to present the uniqueness result for the blind STFT case.
We show that for each section it suffices to acquire $2L$ Fourier measurements to determine the signals. Thus, $ML\approx 2(N+W)$ measurements are enough to determine the signals up to the $L$ scaling ambiguities of Proposition~\ref{prop:symmetries_phase}.
This \rev{bound} can be improved  by choosing the number of Fourier measurements carefully per section. Particularly, we show that  only $N+W-L$ measurements suffice. This result is optimal since the number of parameters to be determined is $N+W$ (the length of the signals), while $L$ entries can be fixed arbitrarily by Proposition~\ref{prop:symmetries_phase}. 
Interestingly, the result holds for any $L<W$. Therefore, considering small overlaps between adjacent windows (i.e., large $L$) increases only the size of the ambiguity group.
The proof is given in Section~\ref{sec:proof_uniqueness_phaseless} and is based on a recursive argument.

\begin{thm} \label{thm:uniqueness_withphases}
For a generic signal $x\in\CN$ and a  generic window $w\in\CW$ with $W> L$, the pair $(x,w)$ is uniquely determined, modulo the trivial ambiguities of Proposition~\ref{prop:symmetries_phase}, from  $N+W-L$ measurements from~\eqref{eq:blind_stft}. \end{thm}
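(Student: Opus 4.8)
The plan is to recover $(x,w)$ by a sequential, section‑by‑section procedure, showing that each short‑time section contributes a number of frequency samples equal to the number of genuinely new unknowns it introduces, and that these unknowns are pinned down by an invertible (generalized) Vandermonde system. The structural fact driving the recursion is that in section $m$ the measurement~\eqref{eq:blind_stft} only involves the products $x[n]\,w[mL-n]$, so that $w$ appears in every section while the set of $x$-entries in play slides to the right by $L$ as $m$ increases by one.

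First I would fix the ambiguities of Proposition~\ref{prop:symmetries_phase} once and for all. Generically $w[0],\ldots,w[L-1]\neq 0$, and the action of $(\C^\times)^L$ multiplies these entries by the $L$ independent factors $\lambda_0^{-1},\lambda_{L-1}^{-1},\ldots,\lambda_1^{-1}$; hence I can normalize $w[0]=\cdots=w[L-1]=1$, thereby eliminating the ambiguity. With this normalization the recovery target is exactly the remaining $N+(W-L)=N+W-L$ entries: all of $x$ together with $w[L],\ldots,w[W-1]$.

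The recursion then processes $m=0,1,2,\ldots$ in order. The key claim is that in section $m$ every product $x[n]\,w[mL-n]$ has at most one unknown factor, the other having been determined or normalized earlier: a newly appearing $x[n]$ (with $n\in((m-1)L,mL]$) multiplies $w[mL-n]$ whose index lies in $[0,L-1]$ and is normalized, while a newly appearing $w[j]$ (first occurring at $m=\lceil j/L\rceil$) multiplies an $x$-entry whose index lies in $[0,L-1]$ and is known by section $1$. Consequently, after moving all fully known products to the left-hand side, \eqref{eq:blind_stft} becomes \emph{linear} in the new unknowns, the coefficient of $\eta_k^n$ being the one-unknown product at position $n$. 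Sampling exactly as many frequencies $k$ as there are new unknowns yields a Vandermonde-type system in the nodes $\eta_k$; inverting it recovers each product, and dividing by the known nonzero factor recovers the entry. Since each section uses (number of new unknowns) frequencies, and each of the $N+W-L$ targets is a new unknown in exactly one section, the total count is precisely $N+W-L$.

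The main obstacle is the invertibility of these per-section systems for an admissible choice of frequencies drawn from the finite set $\{\eta_0,\ldots,\eta_{N-1}\}$. For the trailing sections, where $w$ is already fully known and only $L$ consecutive entries of $x$ are new, the active exponents $n$ are consecutive and the system is an ordinary Vandermonde, invertible for any $L$ distinct frequencies. The difficulty is concentrated in the bootstrap sections ($mL<W$), where both new $x$- and new $w$-entries appear: the unknown products then sit at exponents forming two blocks separated by the contiguous block of already-known products $x[n]w[mL-n]$ with $n\in[L,(m-1)L]$, so the design matrix is a \emph{gapped} generalized Vandermonde whose determinant is a Schur-type polynomial in the nodes. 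I would therefore isolate, as the technical heart, a lemma ensuring that this determinant can be made nonzero by a proper choice of distinct $N$-th roots of unity --- this is precisely the meaning of ``choosing the measurements properly for each window.'' Genericity enters only through the finitely many pivot entries that must be nonzero for the divisions above; their product is a nonzero polynomial in the coordinates of $(x,w)$, so Definition~\ref{def:generic} applies and recovery succeeds for all $(x,w)$ outside its vanishing locus.
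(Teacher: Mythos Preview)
Your proposal is correct and follows essentially the same recursive, section-by-section approach as the paper: normalize $w[0]=\cdots=w[L-1]=1$ using the $(\C^\times)^L$ ambiguity, then observe that in each section every product $x[n]w[mL-n]$ has at most one unknown factor, so the section yields a linear system in its new unknowns which is solved with exactly that many frequency samples, and the total is $N+W-L$. You are in fact more explicit than the paper about the gapped generalized-Vandermonde invertibility issue in the bootstrap sections; the paper simply asserts that the right number of Fourier measurements suffices and leaves this point implicit.
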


While the model~\eqref{eq:blind_stft} has not been studied in the literature, it is instructive to compare  Theorem~\ref{thm:uniqueness_withphases} with~\cite{kech2017optimal}. In this paper, Kech and Krahmer considered the uniqueness of bilinear maps and, as an application, blind (circular) convolution maps $\CN\times\CN\to \CN$. 
In particular, they assume  the  two signals lie in generic known subspaces of dimension $d_1$ and $d_2$. In this case, \emph{all signals} are uniquely determined from their convolution provided that $N\geq 2(d_1+d_2)-4$, modulo a one-dimensional scaling ambiguity between the two signals. 
A similar result holds true when the signals are sparse (see also~\cite{li2016identifiability,choudhary2014sparse}). 
Therefore, the number of measurements is approximately twice the number of parameters to be estimated. 
When considering generic sparse signals, it suffices to demand  $N\geq s_1+s_2$, where $s_1$ and $s_2$ denote the cardinality of the signals.
Comparably, Theorem~\ref{thm:uniqueness_withphases} states that  the number of measurements required to determine a pair of generic signals is  exactly the number of parameters to be recovered $N+W-L$. We did not derive a result that holds for all signals. In addition, 
\rev{the dimension of the ambiguity group presented in Proposition~\ref{prop:symmetries_phase} grows with $L$, whereas the dimension of the ambiguity group in~\cite{kech2017optimal} is always one.}

We now turn our attention to the problem of determining a pair of signals from their phaseless blind STFT measurements~\eqref{eq:phaseless_blind_stft}, which is the focal point of this paper. As expected, in this case the group
of trivial ambiguities increases, but only by two real dimensions. These two additional symmetries correspond to multiplication by global phase and continuous modulation. We note that other symmetries that frequently appear in phase retrieval setups, such as conjugate reflection or discrete shifts, do not occur in our setting because of the aperiodicity of the setup; compare with~\cite[Proposition 1]{bendory2017uniqueness}.  In what follows, $S^1$ denotes the unit circle:
 $$S^1:=\{\eta\in\mathbb{C}: \vert\eta\vert =1\}.$$  

\begin{proposition} \label{prop:symmetries_phaseless}
		Let $\lambda:=(\lambda_0, \ldots, \lambda_{L-1})$ be complex, nonzero, numbers and let $\eta_1,\eta_2\in S^1$.  
	    Then, the following 
		action preserves the phaseless blind STFT measurements~\eqref{eq:phaseless_blind_stft}:
		\begin{align*}
		(\eta_1, \neta_2,\lambda) \circ (x,w)  = & 
		(\eta_1 \lambda_0 x[0], \eta_1 \eta_2 \lambda_1 x[1] , \ldots , \eta_1
		\neta_2^{L-1}\lambda_{L-1} x[L-1],
		\eta_1  \neta_2^L\lambda_0 x[L] , \ldots  ,\\ &
		\lambda^{-1}_0 w[0],  \neta_2\lambda^{-1}_{L-1} w[1], \ldots
		, \eta_2^{L-1}\lambda^{-1}_1 w[L-1], \eta_2^L\lambda_0^{-1} w[L], \ldots ).
		\end{align*}	
		If $N=W$, then we have an additional ambiguity because
		$(x,w)$ and $(\overline{w}, \overline{x})$ have the same phaseless blind  STFT measurements.
\end{proposition}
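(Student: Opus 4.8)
The plan is to verify the invariance by a direct computation, following the same strategy used for Proposition~\ref{prop:symmetries_phase} but now also tracking the two extra phase factors $\eta_1$ and $\eta_2$. First I would rewrite the action compactly: the group element $(\eta_1,\eta_2,\lambda)$ multiplies $x[n]$ by $\eta_1\eta_2^n\lambda_{n\bmod L}$ and multiplies $w[n]$ by $\eta_2^n\lambda^{-1}_{(L-n)\bmod L}$. A quick check against the stated formulas confirms this reading: for example $x[L]\mapsto \eta_1\eta_2^L\lambda_0 x[L]$ since $L\bmod L=0$, and $w[1]\mapsto \eta_2\lambda^{-1}_{L-1}w[1]$ since $(L-1)\bmod L=L-1$.

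The heart of the argument is then to follow the fate of a single summand $x[n]w[mL-n]$ in~\eqref{eq:phaseless_blind_stft}. Under the action it is multiplied by
\[
\bigl(\eta_1\eta_2^n\lambda_{n\bmod L}\bigr)\bigl(\eta_2^{mL-n}\lambda^{-1}_{(L-(mL-n))\bmod L}\bigr).
\]
The two $\lambda$ factors cancel exactly as in Proposition~\ref{prop:symmetries_phase}, because $(L-(mL-n))\bmod L = n\bmod L$ (using $(1-m)L\equiv 0 \pmod L$). The powers of $\eta_2$ combine to $\eta_2^n\cdot\eta_2^{mL-n}=\eta_2^{mL}$, which is independent of $n$, and the single factor $\eta_1$ is likewise independent of $n$. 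Hence every summand picks up the \emph{same} scalar $\eta_1\eta_2^{mL}$, so that $\hat{y}[k,m]\mapsto \eta_1\eta_2^{mL}\hat{y}[k,m]$. Since $\eta_1,\eta_2\in S^1$, this scalar has unit modulus and $\lvert\hat{y}[k,m]\rvert$ is preserved for every $k,m$, which is the claim.

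For the final statement with $N=W$, I would compute the transform of $(\overline{w},\overline{x})$ and relate it to the conjugate of the original. Treating the sums as ranging over all of $\mathbb{Z}$ (legitimate because $x$ and $w$ vanish outside their supports), I write $\hat{y}'[k,m]=\sum_n \overline{w}[n]\,\overline{x}[mL-n]\,e^{-2\pi\I kn/N}$, and substituting $n\mapsto mL-n$ in the conjugate of~\eqref{eq:phaseless_blind_stft} I expect
\[
\overline{\hat{y}[k,m]}=\sum_n \overline{x}[mL-n]\,\overline{w}[n]\,e^{2\pi\I k(mL-n)/N}=e^{2\pi\I kmL/N}\,\hat{y}'[k,m].
\]
Thus $\hat{y}'[k,m]=e^{-2\pi\I kmL/N}\,\overline{\hat{y}[k,m]}$, and taking magnitudes with $\lvert e^{-2\pi\I kmL/N}\rvert=1$ yields $\lvert\hat{y}'[k,m]\rvert=\lvert\hat{y}[k,m]\rvert$, as required.

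The computations are short, so I do not anticipate a genuine obstacle; the only place demanding care is the modular index bookkeeping. Specifically, I must confirm the identity $(L-(mL-n))\bmod L=n\bmod L$ so that the $\lambda$ factors cancel, and I must keep the reindexing $n\mapsto mL-n$ in the $N=W$ case consistent with the support convention, so that the shifted summation picks up exactly the same nonzero terms. Everything else reduces to the elementary observation that a measurement invariant under multiplication by a common unit-modulus scalar is a phaseless invariant.
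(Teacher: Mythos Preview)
Your proposal is correct and follows exactly the same line as the paper: the paper's proof is the single sentence ``It is straightforward to verify that under this action, the blind STFT measurement $\hat{y}[k,m]$ is translated to $\eta_1 \eta_2^{mL}\hat{y}[k,m]$ which has the same absolute value,'' and your computation simply unpacks this verification (and additionally supplies the reindexing argument for the $N=W$ case, which the paper leaves implicit).
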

\begin{proof}
		It is straightforward to verify that under this action, the blind STFT measurement $\hat{y}[k,m]$ is translated to $\eta_1 \eta_2^{mL}\hat{y}[k,m]$ which has the same absolute value.		
\end{proof}

As for the blind STFT ambiguities, Proposition~\ref{prop:symmetries_phaseless} can be understood as an action of the group  $S^1 \times S^1  \times (\C^\times)^L$ on $\C^N \times \C^W$. 
The following proposition uncovers an interesting property of this action. Specifically, it 
shows that the unique element of a quotient of this group that maps {a generic vector}  to itself is the identity. The proof is provided in Section~\ref{sec:proof_generically_freely}.
\begin{proposition} \label{prop:generically_freely}	  
	Define an action of
   the group $ S^1 \times S^1  \times (\C^\times)^L$  on $\C^N \times \C^W$ as in Proposition~\ref{prop:symmetries_phaseless}.	Then,  a quotient of this group by the finite group $\mu_L$
	of $L$-th roots of unity
	acts generically freely\footnote{See Appendix~\ref{sec:groups} and in particular Example~\ref{ex:app}.}.
\end{proposition}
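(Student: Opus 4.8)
The plan is to compute the stabilizer of a generic point directly and to show that, after quotienting by $\mu_L$, it becomes trivial; recall that an action is generically free precisely when the stabilizer is trivial on a dense open set. Here the relevant dense open set is the locus $U\subset\CN\times\CW$ on which every coordinate $x[n]$ and $w[n]$ is nonzero, i.e.\ the complement of the vanishing of $\prod_n x[n]\prod_n w[n]$. Reading off Proposition~\ref{prop:symmetries_phaseless}, an element $(\eta_1,\eta_2,\lambda)$ of $G:=S^1\times S^1\times(\C^\times)^L$ scales $x[n]$ by $\eta_1\eta_2^n\lambda_{n\bmod L}$ and $w[n]$ by $\eta_2^n\lambda^{-1}_{(L-n)\bmod L}$. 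Because the model assumes $N\ge W>L$, the indices $0$ and $L$ both lie in the support of $x$, a fact I will use to extract $\eta_2^L=1$.

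First I would pin down the copy of $\mu_L$ inside $G$ that we quotient by. I would check that the elements
\[
\zeta\longmapsto\bigl(1,\;\zeta,\;(\zeta^0,\zeta^{-1},\ldots,\zeta^{-(L-1)})\bigr),\qquad \zeta^L=1,
\]
scale $x[n]$ by $\zeta^{\,n-(n\bmod L)}=\zeta^{L\lfloor n/L\rfloor}=1$ and fix each $w[n]$ likewise; hence this subgroup lies in the kernel of the action and the action descends to $G/\mu_L$.

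Next, for a point of $U$ I would solve the stabilizer equations, which require all of the multipliers above to equal $1$. Comparing the $x$-equations at $n=0$ and $n=L$ forces $\eta_2^L=1$; the $x$-equation at $n=0$ and the $w$-equation at $n=0$ read $\eta_1\lambda_0=1$ and $\lambda_0^{-1}=1$, giving $\lambda_0=1$ and $\eta_1=1$; and the $x$-equations at $n=r$ for $0\le r\le L-1$ then yield $\lambda_r=\eta_2^{-r}$. The one step that needs care — a bookkeeping check rather than a conceptual obstacle — is verifying that the remaining $w$-equations impose nothing new: substituting $\lambda^{-1}_{(L-n)\bmod L}=\eta_2^{(L-n)\bmod L}$ into $\eta_2^n\lambda^{-1}_{(L-n)\bmod L}=1$ leaves $\eta_2^{\,n+((L-n)\bmod L)}=1$, which holds automatically because $n+((L-n)\bmod L)$ is a multiple of $L$ and $\eta_2^L=1$.

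Collecting these relations, the stabilizer of every point of $U$ equals exactly the subgroup $\{(1,\zeta,(\zeta^{-r})_r):\zeta^L=1\}\cong\mu_L$ identified above. Since this is precisely the subgroup by which we quotient, the induced stabilizer in $G/\mu_L$ is trivial on the dense open set $U$, which is the definition of acting generically freely. I expect the only delicate point to be making the index arithmetic modulo $L$ fully rigorous, in particular the reduction $n+((L-n)\bmod L)\equiv0\pmod L$.
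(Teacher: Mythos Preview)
Your proposal is correct and follows essentially the same approach as the paper: both compute the stabilizer of a generic point directly from the scaling factors $\eta_1\eta_2^n\lambda_{n\bmod L}$ and $\eta_2^n\lambda^{-1}_{(L-n)\bmod L}$, extract $\eta_2^L=1$ by comparing the equations at $n=0$ and $n=L$, and then read off $\eta_1=1$, $\lambda_0=1$, $\lambda_r=\eta_2^{-r}$. Your write-up is in fact a bit more complete than the paper's: you explicitly exhibit the $\mu_L$ subgroup in advance and verify it lies in the kernel of the action, and you check that the $w$-equations impose nothing beyond the $x$-equations via the reduction $n+((L-n)\bmod L)\equiv 0\pmod L$, whereas the paper leaves this implicit.
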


We are now ready to present the main result of this paper. As in \rev{Theorem~\ref{thm:uniqueness_withphases}}, the result holds true for any $W>L$. In particular, for any window it is sufficient to acquire less than $10L$ Fourier intensity measurements. Therefore, less than $10LM\approx 10(N+W)$ are required in total.
Increasing $L$  increases the size of the ambiguity group but has only a negligible effect on the number of required measurements.
The proof is provided in Section~\ref{sec:proof_main_thm}. 

\begin{thm} \label{thm:phaselessblindstft}
	For a generic signal $x\in\CN$ and a generic window $w\in\CW$, the pair $(x,w)$ is uniquely determined, modulo the trivial ambiguities of Proposition~\ref{prop:symmetries_phaseless},
	from fewer than $10(N+W)$ phaseless blind STFT measurements~\eqref{eq:phaseless_blind_stft}. Precisely, $(x,w)$ is uniquely determined by the measurements
	\begin{equation*}
\{|\hat{y}[0,m]|, \ldots , |\hat{y}[Q(m),m]|\}_{m=0, \ldots , M-1},	\end{equation*}
where $Q(m) := \min(2mL+1, 10L-3)$ and $M:=\left\lceil(N+W-1)/L \right\rceil$.  
\end{thm}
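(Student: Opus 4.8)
The plan is to process the $M$ short-time sections \emph{recursively}, recovering in each section the \emph{product signal} $z_m[n] := x[n]\,w[mL-n]$, and then to extract $x$ and $w$ from the collection $\{z_m\}$. For a fixed $m$ the data $\{|\hat{y}[k,m]|\}_k$ are precisely the Fourier magnitudes of the finitely supported signal $z_m$, whose support is the interval $n\in[\max(0,mL-W+1),\min(N-1,mL)]$. Once every $z_m$ is known \emph{with its phases}, recovering $x$ and $w$ is exactly the blind-STFT problem with known phases, handled as in Theorem~\ref{thm:uniqueness_withphases}, and this last step contributes only the scaling ambiguity $\lambda\in(\C^\times)^L$. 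The whole difficulty is therefore to pin down the phase of each $z_m$ from the magnitude data, and to show that the per-section phase indeterminacies assemble into exactly the extra global-phase and modulation factors $(\eta_1,\eta_2)$ of Proposition~\ref{prop:symmetries_phaseless}.

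The recursion has two regimes, which explains the two branches of $Q(m)$. \emph{Bootstrap regime:} for the first sections the window has not yet fully entered the signal, so $z_m$ is supported on $[0,mL]$ with only $d=mL+1$ nonzero entries. A generic finitely supported signal of length $d$ is determined up to the trivial Fourier ambiguities by its autocorrelation, hence by $2d-1$ magnitude samples, and these are contained in the prescribed frequencies $k=0,\ldots,Q(m)=2mL+1$. These early sections recover $w$ and the initial block of $x$, and their mutual overlap is used to align the per-section phases and to eliminate the spurious conjugate-reflection ambiguity. \emph{Steady-state regime:} once $x$ and $w$ have been recovered up to index $(m-1)L$, a direct count shows that within the length-$W$ support of $z_m$ exactly the top $L$ entries $z_m[(m-1)L+1],\ldots,z_m[mL]$ are unknown, while the remaining $W-L$ entries are products of already-recovered values and hence known. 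Section $m$ thus reduces to phase retrieval of a signal with $W-L$ \emph{known} entries, and Proposition~\ref{prop:limited} bounds the number of required magnitudes linearly in the number of unknowns; carrying out the count with $L$ unknown entries yields the cap $Q(m)=10L-3$. Dividing the newly recovered $z_m[n]$ by the known window values produces the next block $x[(m-1)L+1],\ldots,x[mL]$, closing the induction.

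It then remains to verify that the accumulated indeterminacy is exactly the group of Proposition~\ref{prop:symmetries_phaseless}. Each section is recovered only up to a global phase, and the overlap constraints force consecutive global phases to differ by a fixed factor, so their composite is precisely one overall phase $\eta_1$ together with a modulation $\eta_2^{mL}$; the residual freedom in splitting each product $z_m[n]=x[n]w[mL-n]$ into its factors is the scaling $\lambda$. Proposition~\ref{prop:generically_freely} guarantees that, modulo $\mu_L$, this group acts generically freely, so for generic $(x,w)$ the fiber of the measurement map is a single orbit of the expected dimension and no coincidental further ambiguity arises. Genericity in the sense of Definition~\ref{def:generic} is assembled at the end: each step succeeds away from the vanishing locus of a nonzero polynomial---non-vanishing of the $x$ and $w$ entries used as divisors, the hypotheses of Proposition~\ref{prop:limited}, and the conditions making the bootstrap phase retrieval and the phase alignment unique---and the union of these finitely many loci is again contained in the zero set of a single nonzero polynomial, so recovery holds on a dense open set.

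I expect the main obstacle to be the bookkeeping of the steady-state step: proving that exactly $L$ entries of $z_m$ remain unknown and that Proposition~\ref{prop:limited} delivers them from at most $10L-3$ frequencies, while \emph{simultaneously} showing that the per-section global phases compose into the single modulation/global-phase pair rather than into a larger indeterminacy. Controlling the boundary sections (window entry and exit), where the support size changes and the clean ``$W-L$ known / $L$ unknown'' split degenerates, and eliminating the conjugate-reflection ambiguity in the bootstrap, are the most delicate points.
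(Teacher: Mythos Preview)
Your recursion outline matches the paper's strategy, but two points are genuine gaps rather than bookkeeping details.

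\medskip
\textbf{First, the bootstrap is not generically unique.} You write that a generic length-$d$ signal is determined, up to the trivial Fourier ambiguities, by its autocorrelation, and that the only residual issue is conjugate reflection. This is false in one dimension: for a generic $z_1\in\C^{L+1}$ there are, up to global phase and conjugate reflection, $2^{L-1}$ distinct vectors with the same Fourier magnitudes (coming from the $2^{L-1}$ ways of flipping the root pairs of the autocorrelation polynomial). So after Step~1 you carry $2^{L-1}$ candidates for $z_1$; after Step~2 (using Lemma~\ref{lem:beinert} with the one entry $x[L]w[L]$ common to $z_1$ and $z_2$) this becomes $2^L$ candidates for $(z_1,z_2)$. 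The paper does not dispose of this exponential ambiguity by ``mutual overlap'' alone: it proves a separate result (Proposition~5.1 in the paper, established in the appendix by an algebraic--geometric incidence argument) showing that for generic $(x,w)$ only one of the $2^L$ candidates is compatible with the quadratic relations $y_1[n]\,y_3[L+n]=y_2[n]\,y_2[L+n]$ linking sections $m=1,2,3$. This is the main technical step of the proof and your proposal does not supply it.

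\medskip
\textbf{Second, the steady-state count of unknowns is $2L$, not $L$.} With $x$ and $w$ both known only up to index $(m-1)L$, the product $z_m[n]=x[n]\,w[mL-n]$ is unknown not just for the ``top'' indices $n\in[(m-1)L+1,mL]$ (where $x[n]$ is new) but also for the ``bottom'' indices $n\in[0,L-1]$ (where $w[mL-n]$ has index $>(m-1)L$ and is new as well). Thus the unknown set is $S=[0,L-1]\cup[(m-1)L+1,mL]$ with $|S|=2L$ and $|S-S|=3L-1$, and Proposition~\ref{prop:limited} gives the bound $2(3L-1)-1+2(2L)=10L-3$. Your single block of $L$ unknowns would yield only $6L-3$, which is inconsistent with the stated $Q(m)$; more importantly, your alternative of pushing the recovery of all of $w$ into the bootstrap cannot work within the measurement budget, since the bootstrap cost $2mL+1$ per section exceeds the cap $10L-3$ as soon as $m\geq 5$, long before $w$ is exhausted when $W>5L$.
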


In~\cite{bendory2017uniqueness}, it was shown that a pair of a signal and a window (which, in contrast to our model, are allowed to be equal to each other) can be uniquely determined when $L=1$ and all $N$ Fourier intensity measurements are recorded for each window, resulting in order of $N^2$ phaseless blind STFT measurements.
\rev{Namely}, the number of measurements is quadratic in the number of parameters to be determined.
Theorem~\ref{thm:phaselessblindstft} improves this result significantly as,  for any $L$, the number of required measurements is linear in $N$.
In~\cite{bendory2017signal}, we have shown that in the special $x=w$ case, appearing in the FROG technique~\cite{trebino2012frequency}, one needs to acquire only $3B$ measurements, where $B$ is the bandwidth of the signal.
The  two-dimensional blind ptychography problem was analyzed in~\cite{fannjiang2018blind}.
It is important to note, however, that in contrast to the one-dimensional case, the phase retrieval problem \rev{admits a unique solution 
 for generic signals in two dimensions}\footnote{\rev{Interestingly, it was recently shown that this solution might be extremely sensitive to errors~\cite{barnett2018geometry}.}
}, up to ambiguities~\cite{hayes1982reconstruction,bendory2017fourier}. 

Finally, we would like to refer to a recent paper~\cite{ahmed2018blind}, considering the recovery of a pair of signals from the Fourier magnitudes of their (circular) blind deconvolution. The underlying assumption is that the two signals lie in low-dimensional random subspaces of dimensions $k$ and $m$.
Thus, it studies a complementary problem to our model which is based on overlapping windows~\eqref{eq:phaseless_blind_stft}.
The main result of this paper states that the two signals can be 	recovered by a convex program provided that $N/\log^2 N\gg (k+m)$.

\section{Phase retrieval from limited measurements for partially known signals} \label{sec:partial_known_signal}

In this section, we study a general question in phase retrieval about recovering a signal from its Fourier magnitudes, where a subset of the signal's entries  is already known. 
This situation occurs in the phaseless blind STFT problem since, if $W> L$, the sections in~\eqref{eq:phaseless_blind_stft} overlap. Thus, the $m$th section is recovered from its Fourier magnitudes and the knowledge of some of its entries. These known entries 
are  determined by the $(m-1)$th section. We use this procedure successively in the proof of Theorem~\ref{thm:phaselessblindstft} in Section~\ref{sec:proof_main_thm}. 
However, since the main result of this section, Proposition~\ref{prop:limited}, is quite general, we devote this separate section to present it and discuss its ramifications. 

Consider the following setup. Let $x\in \CN$ be a signal and, as usual, we assume $x[i] = 0$ for $i \notin \{0, \ldots ,
N-1\}$.  
In this section, we {examine} the 
continuous Fourier measurements $S^1 \to \R_{ \geq 0}$  given by 
\begin{equation}
A(\eta) =\left\vert \sum_{n=0}^{N-1} x[n] \eta^n\right\vert^2,
\end{equation}
where $\eta$ in on the unit circle $S^1$. 
Let $S \subset [0,N-1]$
be a proper subset  (i.e., $\vert S\vert<N$) and assume that $x[n]$ is known for the complementary set $n \in S^c$.

The question of determining a signal from its Fourier magnitudes, when partial information on the signal's entries are known, has been studied in~\cite{beinert2017enforcing,beinert2015ambiguities,xu1987almost}, where the former generalizes the results of the two latter papers. In particular, it was shown in~\cite{beinert2017enforcing} that almost any signal is uniquely determined from its Fourier magnitudes and one  entry. Since 
we  use this result repeatedly, we cite it as follows:
\begin{lem} \label{lem:beinert} \cite{beinert2017enforcing}
	Let $\ell$ be an arbitrary integer between 0 and N-1. Then, almost every 
    $x\in\CN$ can be uniquely recovered from $A(\eta)$ and $x[\ell]$. If $\ell=(N-1)/2$, then the reconstruction is up to conjugate reflection.
\end{lem}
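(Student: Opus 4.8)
The plan is to translate the problem into the language of the generating polynomial $P(z) = \sum_{n=0}^{N-1} x[n]z^n$ and to describe all signals sharing a given Fourier magnitude via root reflection. On the unit circle $A(\eta) = P(\eta)\overline{P(\eta)}$, and writing $P(z) = x[N-1]\prod_{i=1}^{N-1}(z-\beta_i)$ one has the elementary identity $|\eta - 1/\bar\beta|^2 = |\beta|^{-2}|\eta-\beta|^2$ for $\eta \in S^1$. Consequently, reflecting any root $\beta_i \mapsto 1/\overline{\beta_i}$ and rescaling the leading coefficient by $|\beta_i|$ leaves $A$ unchanged. First I would record that, generically in $x$, the polynomial $P$ has degree exactly $N-1$ with $x[0]\neq 0$, has $N-1$ distinct roots, none lying on $S^1$ and none equal to the reflection $1/\overline{\beta_j}$ of another root. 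Off this proper algebraic locus, the standard spectral-factorization argument shows that every $x'\in\CN$ with the same $A$ has the form $e^{\I\theta}P_T$, where $T\subseteq\{1,\ldots,N-1\}$ indexes the reflected roots and $P_T(z) = x[N-1]\prod_{i\in T}|\beta_i|\prod_{i\in T}(z-1/\overline{\beta_i})\prod_{i\notin T}(z-\beta_i)$. These are finitely many candidates up to a global phase, and the trivial ambiguities are the phase $\theta$ together with the full reflection $T=\{1,\ldots,N-1\}$, which is precisely conjugate reflection $x[n]\mapsto \overline{x[N-1-n]}$; no shift ambiguity occurs because the support is pinned to $[0,N-1]$.

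Next I would use the known entry to eliminate candidates. Since $x[\ell]\neq 0$ generically, and since the $\ell$-th coefficient of $e^{\I\theta}P_T$ is $e^{\I\theta}(P_T)_\ell$, the candidate $e^{\I\theta}P_T$ can be made to match the prescribed value $x[\ell]$ for some phase $\theta$ if and only if $|(P_T)_\ell| = |x[\ell]|$. Hence recovery fails only when there exists a reflection pattern $T\neq\emptyset$ (and $T\neq\{1,\ldots,N-1\}$ in the symmetric case) with $|(P_T)_\ell| = |x[\ell]|$. The full reflection gives $(P_{\{1,\ldots,N-1\}})_\ell = \overline{x[N-1-\ell]}$, whose modulus equals $|x[\ell]|$ identically when $\ell = (N-1)/2$ and generically differs from $|x[\ell]|$ otherwise; this accounts for the stated conjugate-reflection exception at the central index and rules the full reflection out for every other $\ell$.

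The heart of the argument---and the step I expect to be the main obstacle---is to show that for each admissible nonempty $T$ the real function $x\mapsto |(P_T)_\ell|^2 - |x[\ell]|^2$ does not vanish identically, so that its zero set is a proper subvariety; the bad set is then the finite union of these proper subvarieties over all $T$, hence contained in the vanishing locus of a nonzero polynomial as required by Definition~\ref{def:generic}. The difficulty is that $(P_T)_\ell$ is an elementary symmetric expression in a mixture of the roots $\beta_i$ and their reflections $1/\overline{\beta_i}$, so it is not manifestly polynomial in the real and imaginary parts of $x$; one must either produce, for each fixed pattern $T$, an explicit witness configuration at which $|(P_T)_\ell|\neq|x[\ell]|$, or extract a nonvanishing top-order term from the dependence of $(P_T)_\ell$ on the reflected roots. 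I would attempt the witness approach by choosing roots of widely separated moduli, so that reflecting even a single root changes $|(P_T)_\ell|$ by a controllable amount, and then invoke the identity theorem for real-analytic functions to pass from a single witness to genericity. Assembling these pieces---the reflection description of the fibre, the phase-matching reduction to a modulus coincidence, the non-vanishing for each pattern, and the special treatment of the central index---yields the lemma.
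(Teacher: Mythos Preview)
The paper does not prove Lemma~\ref{lem:beinert}: it is quoted directly from Beinert and Plonka~\cite{beinert2017enforcing} and used as a black box, so there is no in-paper argument to compare yours against.

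On its own merits, your outline is the standard spectral-factorization route and is essentially the one in the cited reference. The reduction is correct: on the generic locus you describe, every competitor is of the form $e^{i\theta}P_T$, matching the known datum $x[\ell]$ forces $|(P_T)_\ell|=|x[\ell]|$, and the full reflection is exactly conjugate reflection, explaining the exceptional index $\ell=(N-1)/2$. You also correctly identify the only substantive step left open, namely showing that for each proper nonempty $T$ the function $|(P_T)_\ell|^2-|x[\ell]|^2$ is not identically zero. Two practical remarks there. First, it is cleaner to carry out this step in root coordinates $(x[N-1];\beta_1,\ldots,\beta_{N-1})$ rather than in $x$: each $(P_T)_\ell$ is then a polynomial in $\beta_i,\overline{\beta_i}$ and $|\beta_i|$, the root-labelling ambiguity you would otherwise have to manage disappears, and a single witness immediately gives a proper real-algebraic bad set. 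Second, your ``widely separated moduli'' idea does produce a witness: sending one reflected root $\beta_j\to 0$ (so $1/\overline{\beta_j}\to\infty$, while the compensating factor $|\beta_j|$ keeps $P_T$ bounded) makes $P_\emptyset$ and $P_T$ degenerate to polynomials of different degrees in $z$, from which the inequality of the $\ell$-th coefficients for generic remaining roots follows. With that step filled in, your argument is complete; the details appear in~\cite{beinert2017enforcing}.
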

\noindent In this section, we generalize this result and derive a bound on the number of Fourier intensity measurements required if several entries of the signal are known.

For a generic $x$, the trigonometric polynomial $A(\eta)$ is of degree $N-1$ and thus can be recovered from its value at $2N-1$ distinct frequencies.  If in addition one entry of the signal is known, then Lemma~\ref{lem:beinert} implies that a generic signal is determined uniquely.
The following proposition, which is the main result of this section, provides a bound  on the number of intensity measurements required for uniqueness as a function of the number of known entries. In particular, if the size of the set of unknown entries $\vert S\vert$ is small relative to $N$, then $x$ can be recovered from only a few measurements. In what follows, we denote a difference set by $$S-S = \{ m-n\,\vert\,  m,n \in S\}.$$ \noindent The following result is proved in Section~\ref{sec:proof_prop_pr}.
 
\begin{proposition} \label{prop:limited}
	If $k \geq  2|S-S|-1 + 2\SV$ and $ \vert S\vert<N$, then
	$A(\eta)$ is uniquely determined by $A(\eta_1), \ldots A(\eta_k)$
	for distinct frequencies $\eta_1, \ldots , \eta_k$. 
\end{proposition}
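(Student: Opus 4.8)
The plan is to recast the claim in terms of the autocorrelation of $x$ and then run a dimension count. Writing $X(\eta)=\sum_{n=0}^{N-1}x[n]\eta^{n}$, on the unit circle we have
\[
A(\eta)=X(\eta)\overline{X(\eta)}=\sum_{d=-(N-1)}^{N-1}a_{d}\,\eta^{d},
\]
a real Laurent polynomial whose coefficients $a_{d}=\sum_{n}x[n]\overline{x[n-d]}$ are the autocorrelation of $x$. Since $A$ is determined by the vector $(a_{d})$, it suffices to show the following: if $x$ and $x'$ agree on the known set $S^{c}$ and satisfy $A_{x}(\eta_{i})=A_{x'}(\eta_{i})$ for $i=1,\ldots,k$, then $A_{x}\equiv A_{x'}$. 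Equivalently, I would prove that the difference $D:=A_{x}-A_{x'}$, a real Laurent polynomial vanishing at the $k$ sampling frequencies, must vanish identically.

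The key structural step is to localize $D$ in a single low-dimensional space, independent of the pair $(x,x')$. Splitting $X=X_{S^{c}}+X_{S}$ with $X_{S^{c}}$ known and setting $\Delta:=X_{S}-X'_{S}$, a direct computation gives
\[
D(\eta)=\bigl(|X_{S}(\eta)|^{2}-|X'_{S}(\eta)|^{2}\bigr)+2\,\mathrm{Re}\!\bigl(\Delta(\eta)\,\overline{X_{S^{c}}(\eta)}\bigr).
\]
The first bracket is a difference of autocorrelations of signals supported on $S$, hence a real Laurent polynomial supported on the difference set $S-S$; the real-linear span of such polynomials has dimension $\vert S-S\vert$ (spanning the full $(S-S)$-supported Hermitian space, e.g.\ by polarization). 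The second term, for the fixed known $X_{S^{c}}$, is the image of $\Delta\in\C^{S}$ under a fixed real-linear map, so as $\Delta$ varies it sweeps out a space of dimension at most $2\vert S\vert$. Consequently $D$ always lies in one real-linear space $V$ of Laurent polynomials with $\dim_{\R}V\le \vert S-S\vert+2\vert S\vert\le 2\vert S-S\vert-1+2\vert S\vert\le k$.

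It then remains to show that evaluation at the sampling frequencies is injective on $V$, which is where genericity of the nodes enters and is the main obstacle. Note that $V$ is \emph{not} supported on a short interval---the cross term has exponents spread across $S-S^{c}$---so one cannot invoke the elementary fact that a Laurent polynomial with more roots than its degree vanishes; the whole point is that $V$ is low-dimensional even though its support is large. I would argue instead that for a real basis $p_{1},\ldots,p_{r}$ of $V$ (with $r=\dim_{\R}V$) the determinant $\det\bigl(p_{j}(\eta_{i})\bigr)_{i,j=1}^{r}$ is a real-analytic function of the nodes that is not identically zero, since the $p_{j}$ are linearly independent as functions on $S^{1}$; hence it is nonzero for generic distinct $\eta_{1},\ldots,\eta_{r}$. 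For such frequencies the evaluation map $V\to\R^{k}$, $p\mapsto\bigl(p(\eta_{i})\bigr)$, is injective as soon as $k\ge r$, forcing $D\equiv 0$ and therefore $A_{x}\equiv A_{x'}$. The delicate points I expect to have to pin down are (i) the clean decomposition of $D$ together with the exact dimension estimate $\vert S-S\vert+2\vert S\vert$, and (ii) confirming that the required non-vanishing of this generalized Vandermonde determinant holds for the structured functionals $\eta\mapsto\eta^{d}$ restricted to $V$, rather than for arbitrary linear functionals.
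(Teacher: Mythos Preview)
Your approach and the paper's are the same idea executed differently. The paper splits $x[n]$ into real and imaginary parts, lifts to the set of monomials $\mathcal{U}=\{u[n]u[m],\,u[n]v[m],\,v[n]v[m],\,u[n],\,v[n]\}_{n,m\in S}$, and observes that for each $\eta$ the value $A(\eta)$ minus a known constant is a linear form in $\mathcal{U}$ whose coefficients on the quadratic block lie among $\{\cos(d\eta),\sin(d\eta):d\in S-S\}$; this bounds the rank of the resulting linear system by $2|S-S|-1+2|S|$ and then lets one write any new row $f(\eta)$ as a combination of the sampled rows $f(\eta_i)$. You instead work directly with the Laurent-polynomial difference $A_x-A_{x'}$, decompose it via $X=X_{S^c}+X_S$, and bound the real dimension of the space $V$ containing all such differences. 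Your version avoids the auxiliary linearization in $\mathcal{U}$, and your count $|S-S|+2|S|$ is in fact tighter than the paper's (the paper does not exploit $\cos(-d\eta)=\cos(d\eta)$, which is why it gets $2|S-S|-1$ rather than $|S-S|$ for the quadratic block). You are also right to isolate genericity of the nodes as the delicate point: the paper's step ``hence $f(\eta)=\sum_i\lambda_i f(\eta_i)$'' asserts precisely that the sampled rows already span the full row space, which is not justified there for arbitrary distinct $\eta_i$ either; your generalized-Vandermonde argument (linearly independent real-analytic functions on $S^1$ give a determinant that is not identically zero, hence nonzero for generic nodes) is the standard way to close this in the generic case.
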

\noindent Combining Lemma~\ref{lem:beinert} and Proposition~\ref{prop:limited} we conclude:
\begin{corollary} \label{cor:limited}
	If $k \geq  2|S-S|-1 + 2\SV$  and $\vert S\vert<N$ \rev{(that is, at least one signal entry is known)}, then almost every $x\in\CN$ is determined uniquely  by $A(\eta_1), \ldots A(\eta_k)$
	for distinct frequencies $\eta_1, \ldots , \eta_k$.
\end{corollary}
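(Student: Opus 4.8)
The plan is to obtain Corollary~\ref{cor:limited} as the composition of the two results already in hand, Proposition~\ref{prop:limited} and Lemma~\ref{lem:beinert}, so that the only care needed is in combining two statements that hold ``generically'' in the sense of Definition~\ref{def:generic}, and in selecting a known entry to feed into Lemma~\ref{lem:beinert}. First I would invoke Proposition~\ref{prop:limited}: since $k\ge 2|S-S|-1+2\SV$ and $\SV<N$, the $k$ samples $A(\eta_1),\dots,A(\eta_k)$ already pin down the entire Fourier-magnitude function $A(\eta)$ on $S^1$, equivalently the full autocorrelation sequence of $x$. Thus, after this step, recovering $x$ from $A(\eta_1),\dots,A(\eta_k)$ is reduced to the \emph{classical} phase retrieval problem of recovering $x\in\CN$ from its Fourier magnitudes $A(\eta)$, now carrying the extra side information that $x[n]$ is known for every $n\in S^c$.

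The hypothesis $\SV<N$ guarantees that $S^c\neq\emptyset$, so I would fix some index $\ell\in S^c$; by assumption $x[\ell]$ is one of the known entries. Feeding $A(\eta)$ together with this single entry $x[\ell]$ into Lemma~\ref{lem:beinert} shows that almost every $x\in\CN$ is uniquely recovered. When $|S^c|\ge 2$ I would simply choose $\ell\neq (N-1)/2$ to avoid the conjugate-reflection case of Lemma~\ref{lem:beinert}; in the extreme case $\SV=N-1$ with $S^c=\{(N-1)/2\}$, the single known value $x[\ell]$ is generically non-real, and since conjugate reflection fixes the central position but replaces $x[\ell]$ by $\overline{x[\ell]}\neq x[\ell]$, the reflected signal is inconsistent with the prescribed entry and is therefore excluded, so uniqueness still holds for generic $x$.

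Finally I would assemble the two genericity statements. The set of $x$ on which Proposition~\ref{prop:limited} could fail and the set on which Lemma~\ref{lem:beinert} could fail each lie in the vanishing locus of a nonzero polynomial; their union is again contained in such a locus, hence has empty interior and Lebesgue measure zero. For every $x$ outside this union the two steps apply in turn and recover $x$ uniquely, which is exactly the assertion that almost every $x\in\CN$ is determined by the measurements. I expect the only genuinely delicate point of the corollary itself to be this bookkeeping of the exceptional sets together with the choice of $\ell$; the substantive difficulty is pushed entirely into Proposition~\ref{prop:limited}, whose proof must exploit that the cross-correlation between the known and unknown parts of $x$, although spread over a frequency range of order $N$, contributes only $2\SV$ degrees of freedom, so that the difference of any two admissible magnitude functions lies in a real vector space of dimension at most $2|S-S|-1+2\SV\le k$ and is therefore annihilated by $k$ generic evaluations.
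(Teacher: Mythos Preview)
Your proposal is correct and follows exactly the approach the paper itself indicates: the corollary is stated immediately after the sentence ``Combining Lemma~\ref{lem:beinert} and Proposition~\ref{prop:limited} we conclude,'' so the paper's proof is precisely this two-step composition that you spell out. Your additional care with the $\ell=(N-1)/2$ edge case and the bookkeeping of exceptional sets is more detail than the paper provides, and note that Proposition~\ref{prop:limited} actually holds for \emph{all} $x$ (its proof is pure linear algebra on the coefficients), so only the genericity from Lemma~\ref{lem:beinert} needs to be tracked.
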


The main application for Proposition~\ref{prop:limited} and Corollary~\ref{cor:limited} in this paper is the proof of Theorem~\ref{thm:phaselessblindstft} in Section~\ref{sec:proof_main_thm}.
In particular, the following situation occurs in phaseless blind STFT. The set of unknown entries is $S = [0, L-1] \bigcup [L(m-1), mL -1]$ and thus $|S| = 2L$.
In this case, the difference set is given by $$S-S = [0,L-1] \cup [L(m-2) +1, mL -1],$$
and hence $|S-S| = 3L-1$. 
Proposition~\ref{prop:limited} affirms that in this case the signal can be determined by the values of at most
$k = 2(3L-1)-1+2(2L)=10L-3$ frequencies.

\section{Proofs} \label{sec:proofs}

\subsection{Proof of Theorem~\ref{thm:uniqueness_withphases}}  \label{sec:proof_uniqueness_phaseless}

Throughout the proof we assume that $w[0], \ldots , w[L-1]$ are all nonzero as $w$ is generic. Because of the scaling ambiguities of Proposition~\ref{prop:symmetries_phase}, we can rescale and assume
that $w[0] = w[1] \ldots w[L-1] = 1$ and thereby eliminate the ambiguities.

Let $x',w'$ be a solution to the system of bilinear equations~\eqref{eq:blind_stft}.
 We will use  recursion to show that for generic $(x,w)$, there is a unique solution under the constraint $w'[0] = w'[1] = \ldots = w'[L-1] = 1$, 
and therefore, $x=x'$ and $w=w'$. 

\textbf{Step 0:}
From~\eqref{eq:blind_stft} we have $\hat{y}[0,0]= x'[0] w'[0]$ (for any $k$).  Since $w'[0] =1$, we can determine $x'[0]$ uniquely. 

\textbf{Step 1:} For $m=1$ and fixed $k$, we get
\begin{equation} \label{eq.step1}
\hat{y}[k,1] = x'[0]w'[L] + \eta_k x[1]'w'[L-1] + \ldots + \eta_k^{L}x'[L] w'[0].
\end{equation}
Since $x'[0]$ is determined in Step 0 and $w'[0], \ldots , w'[L-1]$
are normalized to $1$, we get a linear system of equations with the $L+1$ unknowns  $x'[1], \ldots , x'[L], w'[L]$. 
Hence, if we take $M_1 =L+1$ Fourier measurements, we can uniquely determine
the unknowns for generic $x[0]=x'[0]$.

\textbf{Step 2:}
For $m=2$ and fixed $k$, we get
\begin{equation} \label{eq.step2}
\hat{y}_[k,2] = x[0]' w'[2L] + \eta_k x[1]'w'[2L-1]+ \ldots \eta_k^{L} x'[L] w'[L]
+ \eta_k^{L+1} x'[L+1] w'[L-1]+ \ldots + \eta_k^{2L} x'[2L]w'[0].
\end{equation}
By the previous steps, $w'[0], \ldots w'[L]$ and $x'[0], \ldots , x'[L]$
are determined, so this is a linear equation in
the $2L$ unknowns
$w'[2L], \ldots w'[L+1], x'[L+1], \ldots , x'[2L]$.
Hence we can solve the system with $M_2=2L$ Fourier measurements. In total, up to this stage we require $3L+2$ Fourier measurements.

\textbf{Step m:} In the $m$th step, we  observe 
\begin{eqnarray*}
	\hat{y}[k,m] & = & x'[0] w'[mL] + \eta_k x'[1] w'[mL-1] + \ldots \eta_k^{L-1} 
	x'[L-1]w'[mL -L +1] +\\
	& & 
	\eta_k^L x'[L] w'[mL -L] + \ldots \eta_k^{2L-1}x'[2L-1] w'_[mL-2L+1] + \\
	& & \ldots\\
	& &  \eta_k^{mL-L+1} x'[mL-L +1] w'[L-1] + \ldots \eta_k^{mL} x'[mL]w'[0].
\end{eqnarray*}
By induction, there are at most $2L$ unknowns in this equation,
namely, $x'[L(m-1) + 1], \ldots , x'[Lm]$
and
$w'[Lm], \ldots , w'[Lm - L + 1]$. Hence $M_m= 2L$ Fourier measurements suffice. 

The crude estimate above leads to order of $2ML\approx 2(N+W)$ Fourier measurements. This estimate can be significantly refined by taking into account that we know the length of the signals $N$ and $W$. Starting from the $m=3$, we need one Fourier measurement for each unknown.
Therefore, to determine the last $N-(2L+1)$ entries of the signal we need $N-(2L+1)$ measurements in total, and similarly for the last $W-(2L+1)$ entries of the window (if $W\geq 2L+1$). 
Together with the $3L+2$ measurements required in the first three steps, we conclude that merely $N+W-L$ measurements suffice. The same holds true if $2L \geq W>L$.

\subsection{Proof of Proposition~\ref{prop:generically_freely}} \label{sec:proof_generically_freely}

Since $x$ and $w$ are generic, we may assume that $x[i]$ and $w[L-i]$ are nonzero for  $0\leq i \leq L$.  Then, if $(\eta_1,\eta_2,
\lambda_0, \lambda_{L-1})$ acts trivially on $(x,w)$, then we
see that for all $0 \leq i \leq L$,
$$\eta_1 \neta_2^i
	\lambda_{ (i \bmod L)} = 1 \text{ (coefficient of } x[i]\text{)} $$ and
	$$\neta_2^{mL-i} \lambda_{(i \bmod L)}^{-1} = 1 \text{ (coefficient of }
	w[mL-i]\text{)}.$$
	Looking at the coefficients of $x[0]$ and $x[L]$ we see that
	both $\eta_1 \lambda_0 =1$ and $\eta_1 \eta_2^L \lambda_0=1$ which
	forces $\eta_2^L = 1$.
	Looking at the coefficient of $w[L]$, we see that $\eta_2^{mL} \lambda_0^{-1} = 1$ so $\lambda_0 = \eta_2^L = 1$ and also forces $\eta_1 = 1$.
	Looking at the coefficient of $x[i]$ or $w[i]$ for $i > 0$ we also
	see that $\lambda_i = \eta_2^{mL-i}$. Therefore, the  elements
	of $S^1 \times S^1 \times (\C^\times)^L$ that fix the generic element
	of $\C^N \times \C^W$ are of the form
	$((1, \omega), (1, \omega, \ldots , \omega^{L-1}))$ where
	$\omega = e^{2 \pi \iota k/L}$. These elements form a subgroup
	isomorphic to the finite group $\mu_L$ of $L$-th roots of unity.
	Hence, the group $\left(S^1
	\times S^1 \times (\C^\times)^L\right)/\mu_L$ acts generically freely. Note
	that since
	$S^1/\mu_L$ is isomorphic to $S^1$ and $\C^\times/\mu_L$
	is  isomorphic to $\C^\times$, the quotient group is also
	isomorphic to $S^1 \times S^1 \times (\C^\times)^L$.

\subsection{Proof of Theorem~\ref{thm:phaselessblindstft}} \label{sec:proof_main_thm}

Given a pair of signals $(x',w')$ whose STFT intensity measurements are $\{\vert \hat{y}[k,m]\vert \}$ as in~\eqref{eq:phaseless_blind_stft},  let us define the vector $y_m\in\mathbb{C}^{mL+1}$:
\begin{equation*}
y_m[n] := x'[n] w'[mL-n], \quad n=0,\ldots,mL. 
\end{equation*}
Note that the length of the vectors changes for different values of $m$.
From Theorem~\ref{thm:uniqueness_withphases}, we know that modulo
the $(\C^\times)^L$ group of scaling ambiguities (see Proposition~\ref{prop:symmetries_phase}), a generic pair $(x',w')$
can be recovered from $y_m$ for $m=0, \ldots M-1$. Therefore, we will prove that the vectors $y_m$ are unique, modulo ambiguities.

The proof takes inspiration from the recursive technique of~\cite{bendory2017signal}.
In particular, for fixed $m$, let $A_m(\eta) = \vert \sum_{n=0}^{mL} y_m[n] \eta^n\vert^2$ be the Fourier intensity function of $y_m$, where $\eta\in S^1$.  Then, the  phaseless STFT measurements are the samples 
$\vert\hat{y}[k,m]\vert^2 = A_m(e^{2\pi \I k/N})$.
Since $A_m(\eta)$ is a trigonometric polynomial of degree $mL$, a priori, it can always be recovered from the $2mL+1$ STFT intensity measurements
$\vert\hat{y}[0,m]\vert, \ldots , \vert\hat{y}[0,2mL]\vert$. 
However, we will show that if $y_0, \ldots , y_{m-1}$ are previously known (modulo ambiguities),  then
$A_m(\neta)$ can be recovered from at most $10L-3$ STFT intensity measurements and that the knowledge of $y_{m-1}$ and $A_m(\neta)$ uniquely determines $y_m$. 

The proof follows the  recursion below:

\textbf{Step 0:}
As in Section~\ref{sec:proof_uniqueness_phaseless}, we may use the action of
the ambiguity group to assume that $w'[0] =\ldots = w'[L-1] =1$.
Since we know $|\hat{y}[0,0]| = |x'[0] w'[0]|$ and  we assume that $w'[0] = 1$,
we know $x'[0]$ up to phase. Using the global phase ambiguity, we can then fix the phase of $x'[0]$ arbitrarily.

\textbf{Step 1:} The trigonometric polynomial $A_1(\neta)$ has degree $L$,
so it is uniquely determined by the  $2L+1$ STFT intensity measurements
$\vert\hat{y}[0,1]|, \ldots |\hat{y}[2L,1]\vert$. However,
the vector $y_1 = (x'[0]w'[L], \ldots , x'[L] w'[0])$
is not uniquely determined by the Fourier intensity function $A_1(\eta)$. Indeed, there are up to $2^{L-1}$ vectors, modulo global phase and reflection and conjugation,
with Fourier intensity function $A_1(\eta)$; see for instance~\cite[Section 3.1]{bendory2017fourier}. 

\textbf{Step 2:}
Using the second $S^1$ ambiguity, we fix the phase of $w'[L]$ arbitrarily. For example, we may fix it to be real. 
In addition, note that for each of the possible solutions from Step 1,
$\vert w'[L] x'[L]\vert$ has to be the same. 
The reason is that the last entry of the autocorrelation of  $y_1$ (which is equivalent to $A_1(\eta)$) is given by 
$w[0]\overline{x[0]} w[L] \overline{x[L]}$  and $x[0]$ and $w[0]$ are already known from Step 0. 

Now, $A_2(\eta)$ is a trigonometric polynomial of degree $2L$,
hence it is uniquely determined by $4L + 1$ STFT intensity measurements
$\vert\hat{y}[0,2]|, \ldots , |\hat{y}[4L,2]\vert$.
From Lemma~\ref{lem:beinert}, we know that the Fourier magnitudes of a signal and  the knowledge of one of its entries suffice to determine almost all signals. 
Since $\vert x'[L]w'[L]\vert$ is determined, for each $(x'[0], \ldots , x'[L], w'[0], \ldots , w'[L])$ obtained in Step 1,
there are two possible vectors 
\begin{equation*}
(x'[0], \ldots , x'[2L], w'[0], \ldots , w'[2L])  \mbox{  and  } (\overline{w'}[0], \ldots , \overline{w'}[2L], \overline{x'}[0], \ldots
, \overline{x'}[2L]),
\end{equation*}
with Fourier intensity function determined by the phaseless STFT measurements
$|\hat{y}[0,2]|, \ldots , |\hat{y}[4L, 2]|$.

\textbf{Step 3:}
After step 2, we have $2^{L}$ possible values for the unknowns
$$(x'[0], \ldots , x'[2L], w'[0],\ldots, w'[2L]),$$ that are consistent with the constraints 
\begin{equation} \label{eq:step3_1}
\left\vert x'[0]w'[L] + \neta x'[1] w'[L-1] + \ldots \neta^{L}x'[L]w'[0]\right\vert^2 = A_1(\eta),
\end{equation}
and
\begin{equation} \label{eq:step3_2}
\left\vert x'[0] w'[2L]+ \eta x'[1]w'[2L-1] + \ldots \eta^{2L}x'[2L]w'[0]\right\vert^2 = A_2(\eta).
\end{equation}

In this stage, $A_3(\eta)$ is a trigonometric polynomial of degree $3L$ so it can be determined by  the
$6L+1$ STFT intensity measurements $|\hat{y}[0,3]|, \ldots, |\hat{y}[6L,3]|$. Moreover, for each choice
of $(x'[0], x'[1], \ldots, x'[2L], w'[0], \ldots , w'[2L])$, the values of
$w'[2L+1], \ldots , w'[3L], x'[2L+1], \ldots , x[3L]$ are uniquely determined
from $A_3(\eta)$; see Lemma~\ref{lem:beinert}. 

The following result shows that at this step there is only one pair of vectors (modulo ambiguities), out of the $2^L$ possible vectors of Step 2, that is consistent with the constraints. The proof is somewhat technical so we defer it to Appendix~\ref{sec:proof_step3uniqueness}.

\begin{proposition} \label{prop:step3uniqueness}
	For generic choice of vectors $x\in\CN$ and $w\in\CW$, there is a unique choice among the $2^L$
	possible vectors $(x'[0], \ldots , x'[2L], w'[0], \ldots , w'[2L])$ that is consistent with~\eqref{eq:step3_1} and~\eqref{eq:step3_2}  such that the equations
	\begin{gather*} \left\{ \left\vert\sum_{n=0}^{3L} x'[n] w'[3L-n]e^{2\pi \I kn/N} \right\vert= \left\vert \hat{y}[k,m]\right\vert \right\}_{k=0,\ldots , 2L}
	\end{gather*}
	have a solution for the unknown $x'[i]$ and $w'[j]$ for $i,j=0,\ldots 2L$.
\end{proposition}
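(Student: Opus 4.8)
The plan is to turn the recovery problem for section~$3$ into a finite, combinatorial consistency check. After Step~2 the candidate vectors form a \emph{finite} set of size $2^L$, and each candidate rigidly determines the middle block $y_3'[L], \ldots, y_3'[2L]$ of the degree-$3L$ vector $y_3'$: these $L+1$ entries equal the products $x'[n]w'[3L-n]$ for $n=L,\ldots,2L$, and every factor $x'[L],\ldots,x'[2L],w'[L],\ldots,w'[2L]$ was already fixed in Steps~0--2. The section-$3$ intensity measurements recover the full Fourier intensity function $A_3(\eta)$, a trigonometric polynomial of degree $3L$. Hence a candidate survives precisely when its fixed middle block can be completed, by some choice of the new entries $x'[2L+1],\ldots,x'[3L]$ and $w'[2L+1],\ldots,w'[3L]$, to a vector of Fourier intensity $A_3$. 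First I would invoke Lemma~\ref{lem:beinert}: feeding $A_3$ together with a single known middle entry (chosen away from the center so as to land in the unambiguous case of the lemma) pins down a unique magnitude-$A_3$ vector $\tilde y_3$; the remaining $L$ known middle entries then become $L$ complex \emph{consistency conditions}, namely $\tilde y_3[n]=x'[n]w'[3L-n]$ for the other indices $n\in\{L,\ldots,2L\}$. The true candidate satisfies these tautologically, and the whole content of the proposition is that, generically, no other candidate does.

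To prove this I would argue inside the genericity framework of Definition~\ref{def:generic}, showing that the set of pairs $(x,w)$ for which some \emph{wrong} candidate also satisfies its consistency conditions lies in the vanishing locus of a nonzero polynomial. Each wrong candidate is produced from the true $(x,w)$ by an explicit algebraic recipe: reflect a subset $T$ of the roots of the degree-$L$ polynomial $p_1(\eta)=\sum_{n=0}^{L}y_1[n]\eta^n$ across the unit circle (these are the $2^{L-1}$ spectral-factorization branches of $A_1$), possibly composed with the local conjugate-swap introduced in Step~2. Restricting to the Zariski-dense open set where $p_1$ has $L$ distinct roots, each branch assignment $T\mapsto (y_3^{T}[L],\ldots,y_3^{T}[2L])$ is an algebraic function of $(x,w)$, and the consistency conditions become algebraic equations in the entries of $(x,w)$. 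Since a \emph{single} nonvanishing condition already cuts out a proper subvariety, the task reduces to the following injectivity statement: for generic $(x,w)$ the map from the finite set of branches to the forced middle block $(y_3[L],\ldots,y_3[2L])$ is injective. As injectivity on a Zariski-dense open set follows once it holds at a single point, it then suffices to exhibit one explicit pair $(x,w)$ --- for instance one whose first window has well-separated roots, none reciprocal-conjugate to another --- for which reflecting any nonempty proper subset of the roots of $p_1$ provably perturbs the middle block of $y_3$.

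I expect the bookkeeping behind that last injectivity claim to be the main obstacle. The difficulty is that the middle entries $y_3[L],\ldots,y_3[2L]$ mix the reflected first-window data with the Step-2 extension in a bilinear fashion, so the dependence of each consistency condition on the reflection pattern $T$ is intricate, and one must control it \emph{uniformly} over all nontrivial patterns together with the swap. This is precisely why the authors defer the argument to Appendix~\ref{sec:proof_step3uniqueness}; the cleanest route is to track how a single root reflection $z\mapsto 1/\overline z$ propagates through the bilinear products into the middle block and to verify that its effect cannot be undone by any admissible completion, after which composing reflections and taking the product of the finitely many resulting nonzero polynomials closes the argument. Finally, I would record the one genuine exception: when $N=W$ the global conjugate-swap $(x,w)\mapsto(\overline w,\overline x)$ is a true symmetry by Proposition~\ref{prop:symmetries_phaseless}, so in that case the ``unique'' survivor is to be understood modulo that ambiguity.
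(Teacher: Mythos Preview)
Your overall architecture is sound and close in spirit to the paper: reduce to showing that the ``bad'' locus (pairs $(x,w)$ for which some wrong Step-2 candidate can be completed at section~3) is a proper subvariety, and then exhibit a single explicit point outside it. But your reduction to the \emph{injectivity} of the map $T\mapsto (y_3^T[L],\ldots,y_3^T[2L])$ is not the right statement. Injectivity only says that distinct Step-1 branches give distinct forced middle blocks; it does \emph{not} rule out that a wrong branch's middle block coincides with the middle block of some \emph{other} vector of Fourier intensity $A_3$ (i.e., the true $y_3$ with some of \emph{its} roots reflected). Survival of a wrong candidate requires exactly such a coincidence, and that is what has to be excluded. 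You acknowledge the real condition in your last paragraph (``cannot be undone by any admissible completion''), but the injectivity sentence as written would not imply it, and the actual verification --- controlling simultaneously the $2^{L-1}$ branches of $A_1$, the Step-2 swap, and the $2^{3L-1}$ branches of $A_3$ --- is not carried out.

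The paper sidesteps this combinatorics entirely by a different parametrisation. It forgets $(x,w)$ and works directly with triples $(y_1,y_2,y_3)\in\C^{L+1}\times\C^{2L+1}\times\C^{3L+1}$ constrained only by the bilinear relations $y_1[n]\,y_3[L+n]=y_2[n]\,y_2[L+n]$, which hold automatically for products $x[n]w[mL-n]$. It then sets up the incidence variety of pairs of equivalence classes with equal Fourier intensities and observes that, since the bad locus is closed and the projection is finite, it suffices to exhibit \emph{one} triple for which uniqueness holds. The chosen example is sparse: $y_1=(1,0,\ldots,0,2)$ and $y_2$ supported on $\{0,L,2L\}$ with all entries $1$. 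Then $\hat y_2(\eta)=1+\eta^L+\eta^{2L}$ is cyclotomic, so $y_2$ has no nontrivial spectral-factorization ambiguity, and any non-equivalent $y_1'$ with the same intensity as $y_1$ must have at least three nonzero entries, which over-constrains the bilinear system for $y_3'$ by a dimension count. This avoids tracking root reflections altogether; if you want to complete your route, you would still need a concrete $(x,w)$ and a direct check, and the paper's example shows that choosing one with \emph{maximally degenerate} (rather than well-separated) first windows is what makes the verification tractable.
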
 
From Proposition~\ref{prop:step3uniqueness}, we have  uniquely determined  the vectors
$y_0,y_1, y_2, y_3$, modulo the trivial ambiguities of Proposition~\ref{prop:symmetries_phaseless}, and we can now proceed recursively.  

\textbf{Step m:}
Note that knowledge of $y_0, \ldots , y_{m-1}$ determines, modulo trivial ambiguities, the entries $(x'[0], \ldots , x'[(m-1)L])$ and $(w'[0], \ldots , w'[(m-1)L]$. Then, if $m > 3$ and $y_0, \ldots , y_{m-1}$ are known, then the entries $y_m[n] = x'[n]w'[mL -n]$ are known if $L\leq n \leq  (m-1)L$.
Hence, from Lemma~\ref{lem:beinert}, a pair of generic signals $(x,w)$ is uniquely determined by the Fourier intensity function $A_m(\eta)$.
A priori, we need $2mL +1$ intensity measurements to determine the trigonometric polynomial $A_m(\eta)$. However,
the only unknown entries in the vector $y_m$ are the $2L$ coefficients $S= [1,L] \cup [mL -L, mL -1]$. By proposition~\ref{prop:limited}, we can therefore uniquely determine $A_m(\eta)$ from at most  $10L-3$ measurements.

This concludes the proof.

\subsection{Proof of Proposition~\ref{prop:limited}}
\label{sec:proof_prop_pr}
We begin by expressing $A(\eta)$ as a trigonometric polynomial. Let us write
$x[n] = u[n] + \iota v[n]$, so
$$A(\eta) = \left(\sum_{n=0}^{N-1} u[n]\cos( n\eta) - v[n] \sin(n\eta)\right)^2 + \left( \sum_{n=0}^{N-1} u[n] \sin(n\eta) + v[n] \cos(n\eta)\right)^2.$$
Now, we notice that the right hand side can be expanded in terms of the $4N^2$ quadratic monomials 
$$\{ u[n]u[m], u[n]v[m], v[n]v[m]\}_{\{0 \leq n, m \leq N-1\}}.$$
Writing these expressions explicitly, we get the following coefficients:
\begin{enumerate}
	\item the coefficient of $u[n]u[m] $ is $\cos (n\eta)  \cos (m\eta) + \sin (\eta n) \sin (\eta m)  = \cos((n-m)\eta) = \cos(|n-m|\eta)$;
	\item the coefficient of $v[n]v[m]$ is $\cos(|n-m|\eta)$;
	\item the coefficient of $u[n]v[m]$ is $\sin((n-m)\eta) = \pm(\sin(|n-m|\eta))$.
\end{enumerate}
Since we assume to know the entries for $n \in S^c$,  we denote 
$u[n] = c_n, v[n] = d_n$ and write $x[n] = u[n] + \iota v[n]$ 
only for the unknown set $S$.
For fixed value of $\eta$, we can then write  
the corresponding coefficients explicitly:
\begin{enumerate}
	\item for $u[n]u[m]$ we have $(\SV^2+\SV)/2$  coefficients  given by  $\cos(\vert n-m\vert \eta)$ for $m,n \in S$;
	\item for $v[n]v[m]$ we have $(\SV^2+\SV)/2$ coefficients  given by  $\cos(\vert n-m\vert \eta)$ for $m,n \in S$;
	\item for $u[n]v[m]$ we have $\SV^2-\SV$  coefficients given by  $\sin((n-m)\eta)$      for $m,n \in S$ when excluding $m=n$ (since in this case the coefficients are zeros);
	\item for $u[n]$ we have $\SV$  coefficients given by  $\sum_{m \in S^c} c_m(\cos(n-m) \eta)
	+ d_m \sin((n-m)\eta)$ for $n\in S$;
	\item for $v[n]$ we have $\SV$  coefficients given by $\sum_{m \in S^c} -c_m \sin((n-m)\eta)
	+ d_m (\cos(n-m)\eta)$ for $n\in S$;
	\item the constant term is $\sum_{n,m \in S^c} c_n d_m (\sin(n-m)\eta )+(c_n c_m + d_n d_m) \cos((n-m)\eta)$.
\end{enumerate}
Thus, for any $\eta \in S^1$, the value of $A(\eta)$ provides a linear
equation in the $2\SV^2 + 2\SV$ unknowns
$${\mathcal U} = \{u[n]u[m], u[n]v[m], v[n]v[m]\}_{n,m \in S} \bigcup \{u[n], v[n]\}_{n \in S}.$$
If we have $k$ intensity measurements $A(\eta_1) , \ldots , A(\eta_k)$,
then we obtain $k$ linear equations in the unknowns ${\mathcal U}$.
However, we note that the coefficients
of the quadratic terms $$\{u[n]v[m], v[n]v[m], u[n]v[m]\}_{\{n,m \in S\}}$$
are in the set $\{\cos(n-m), \pm \sin(n-m)\}_{n-m \in S-S}$.
Therefore, the rank of the linear system is  at most $ 2\vert S-S\vert -1 + 2\SV$, where we subtract one because  $\sin(n-m)=0$ for $m=n$.
It follows that if $k \geq  2\vert S-S\vert -1 + 2\SV$
and $A(\eta_1), \ldots , A(\eta_k)$ are known for distinct
$\eta_1, \ldots , \eta_k$, then $A(\eta)$ can be
computed for any value of $\eta$. Lemma~\ref{lem:beinert} implies that in this case, a generic $x$ can be determined uniquely. 

To explicitly compute $A(\eta)$ from $A(\eta_1), \ldots A(\eta_k),$ we consider
the system of linear equations 
in the variables ${\mathcal U}$:
\begin{eqnarray} \label{eq:linear_system}
	f(\eta_1)  & = & A(\eta_1) - D(\eta_1)\nonumber\\
	f(\eta_2)  &  = & A(\eta_2) - D(\eta_2)\\
	\vdots \nonumber \\
	f(\eta_k) & = & A(\eta_k) - D(\eta_k), \nonumber
\end{eqnarray}
where for any value $\eta$
\begin{align*}
f(\eta):=&\sum_{n,m \in S} (\cos\eta(n-m)) (u[n]v[m] +v[n]v[m]) + \sin(\eta(n-m)) u[n]v[m]  \\+ &
\sum_{n \in S,m \in S^c} (\cos(\eta(n-m))u[m] + 
\sin(\eta(n-m)) v[m]) u[n]  \\+&
\sum_{n \in S,m \in S^c} (-\sin(\eta(n-m))u[m]  + \cos(\eta(n-m)) v[m]) v[n])
\end{align*}
is a linear combination of the variables in the set ${\mathcal U}$
and 
$$D(\eta)  := \sum_{n,m \in S^c} c_n d_m (\sin(n-m)\eta )+(c_n c_m + d_n d_m) \cos((n-m)\eta)$$
is known.
\rev{Accordingly, the right-hand side of~\eqref{eq:linear_system} contains $k$ known scalars, and the left-hand side is composed of a linear combination  of the unknown set ${\mathcal U}$; the particular linear combination depends on the signal's known entries (the set $S^c$) and the sampling points $\eta_1,\ldots,\eta_k$.}
Since the rank of the coefficient matrix of this system is at most
$  2\vert S-S\vert -1 + 2\SV$ we know that it is overdetermined.
Hence, \rev{for fixed $\eta$}, there exist (not necessarily unique) scalars $\lambda_1, \ldots ,
\lambda_k$ such that
$$f(\eta) = \lambda_1 f(\eta_1) + \ldots \lambda_k f(\eta_k).$$
We also know that the  system is consistent since $A(\eta_1), \ldots , A(\eta_k), A(\eta)$
are Fourier intensity measurements from an actual vector. 
It follows that
$$A(\eta) - D(\eta) = \lambda_1 \left(A(\eta_1) - D(\eta_1)\right) +
\ldots  + \lambda_k
\left(A(\eta_k) - D(\eta_k)\right),$$
which allows us to determine $A(\eta)$ since $D(\eta)$ is known.

\rev{An important question from a practical point of view is whether the linear system~\eqref{eq:linear_system} is stable to errors. We leave this analysis for future research}.

\section*{Acknowledgment}
Tamir Bendory is grateful to Ti-Yen Lan for discussions on ptychography. 
Dan Edidin acknowledges support from Simons Collaboration Grant 315460. Yonina C. Eldar acknowledges support from the European Union's Horizon 2020 research and innovation program under grant agreement No. 646804-ERCCOG-5BNYQ, and from the Israel Science Foundation under Grant no. 335/14.

\bibliographystyle{ieeetran}

\appendix 

\numberwithin{equation}{subsection}

\subsection{Necessary background on group theory} \label{sec:groups}

For completeness, we briefly present some basic definitions in group theory used throughout the proofs. Most of the definitions are taken from~\cite{chirikjian2016harmonic}, yet they also appear in any standard book on group theory. 

We start with the general definition of a group.
\begin{defn}
A group is a set $G$ together with an operation $\circ$ such that for any $g_1,g_2,g_3\in G$ the following properties hold:
\begin{itemize}
	\item (closure) if $g_1,g_2\in G$, then $g_1\circ g_2\in G$;
	\item (associativity) $g_1\circ(g_2\circ g_3) = (g_1\circ g_2)\circ g_3$;
	\item (existence of an identity) there exists an element $1\in G$ such that $1\circ g= g\circ 1=g$ for any $g\in G$;
	\item (existence of an inverse) for every element $g\in G$, there exists an element $g^{-1}\in G$ such that $g^{-1}\circ g=g\circ g^{-1}=1$.
\end{itemize}  
\end{defn}
\noindent We next present the notion of subgroups and in particular normal subgroups.
\begin{defn}
 A subgroup $H$ is a subset of a group $H\subseteq G$ which is itself a group, namely, it is closed under the group operation of $G$, $1\in H$ and $h^{-1}\in H$ whenever $h\in H$. A subgroup $N$ is called \emph{normal} if  $g^{-1}Ng=N$ for all $g\in G$.  
\end{defn}
\noindent 
For example, any subgroup of an Abelian group (a group satisfying $g_1\circ g_2=g_2\circ g_1$ for all $g_1,g_2\in G$) is normal. 

Our next definition concerns the notions of cosets and equivalence classes:
\begin{defn}
	Given a subgroup $H$ of $G$ and any element $g\in G$, the left coset $gH$ (similarly, the right coset $Hg$) is defined as 
	\begin{equation*}
	gH:=\{g\circ h\vert h\in H\}.
	\end{equation*} 
Any group is divided into disjoint left (right) cosets and if $g_1$ and $g_2$ are in the same coset 
we denote $g_1\sim g_2$. The relation between $g_1$ and $g_2$ is called an  \emph{equivalence relation}. The set of all left (right) cosets is called \emph{coset space} and is denoted by $G/H$ $(G\backslash H)$.  	
\end{defn}
\noindent Note that  the sets of left and right cosets coincide for normal subgroups. 
We are now ready to define quotient groups:

\begin{defn}
If $N$ is a normal subgroup of $G$, then the coset space $G/N$ together with the operation  $(g_1N)(g_2N)=(g_1\circ g_2)N$ is a group, called the \emph{quotient group}.
\end{defn}

The last notion we would like to mention is of a generically freely action of a group, which is used in Proposition~\ref{prop:symmetries_phaseless}.
A group $G$ acts \emph{generically freely}
on a vector space $V$ if for a generic vector $v \in V$, $gv = g'v$ for some $g,g'\in G$, 
if and only if $g=g'$. Equivalently, $gv =v$ if and only if $g$ is the identity element. In other words, the only element that acts trivially on a generic vector is the identity. Below we provide a simple example to illustrate the definition:

\begin{example} \label{ex:app}
	Let $G=S^1\times S^1$ acting on $\mathbb{C}^4$, where we view $G$ as consisting of diagonal matrices 
	$\begin{pmatrix}
	e^{\I\theta_1} & 0\\0 &e^{\I\theta_2} &
	\end{pmatrix}$ 
	and $\mathbb{C}^4$ as $2\times 2$ matrices 
$\begin{pmatrix}
a & b\\c &d &
\end{pmatrix}$.
The action of $G$ on $\mathbb{C}^4$ is by conjugation; that is, $g\circ x$ is defined to be the matrix 
\begin{equation*}
\begin{pmatrix}
e^{\I\theta_1} & 0\\0 &e^{\I\theta_2} &
\end{pmatrix}
\begin{pmatrix}
a & b\\c &d &
\end{pmatrix}
\begin{pmatrix}
e^{-\I\theta_1} & 0\\0 &e^{-\I\theta_2} &
\end{pmatrix} = 
\begin{pmatrix}
a & be^{\I(\theta_1-\theta_2)}\\ ce^{\I(\theta_2-\theta_1)} &d &
\end{pmatrix}.
\end{equation*}
Now, observe that if $g=	\begin{pmatrix}
	e^{\I\theta} & 0\\0 &e^{\I\theta} &
\end{pmatrix}
$ for any $\theta$, then $g\circ x = x$ for all matrices in $\mathbb{C}^4$.
 Thus, there is a subgroup of $S^1\times S^1$ which acts trivially on $\mathbb{C}^4$, even though it consists of more than
 the identity element. 
 This is an example of an action which is not generically free. 
{On the other hand, the quotient of $S^1\times S^1$ by the subgroup of  matrices where $\theta_1=\theta_2$ acts generically free.}
\end{example}

\subsection{Proof of Proposition~\ref{prop:step3uniqueness}}
\label{sec:proof_step3uniqueness}

Let us define the vectors
\begin{align*}
y_0 &:= (x[0]w[0])\in\mathbb{C}, \\
y_1&:= (x[0]w[L],  \ldots , x[L] w[0])\in\mathbb{C}^{L+1},\\ 
y_2 &:= (x[0]w[2L], \ldots , x[2L]w[0])\in\mathbb{C}^{2L+1},\\
y_3 &:= (x[0] w[3L], \ldots , x[3L]w[0])\in\mathbb{C}^{3L+1}.
\end{align*}
The entries of $y_0,y_1, y_2, y_3$ satisfy the following relations:
\begin{equation} \label{eq:relation_of_y0}
y_0[0]y_2[L] = y_1[0]y_1[L]
\end{equation}
\begin{equation} \label{eq:relation_of_ys}
\{y_1[n] y_3[L+n] = y_2[L+n] y_2[n]\}_{n=0,\ldots L}.
\end{equation}
Equation~\eqref{eq:relation_of_y0}  implies that we may determine $y_0$ from the entries of
$y_1, y_2$ provided that $y_2[L] \neq 0$, which is equivalent to assuming that
$w[L]$ and $x[L]$ are nonzero. Thus, we may ignore $y_0$. 
Therefore, we consider the three vectors $y_1, y_2,y_3$,  where $y_1, y_2$ are unconstrained generic vectors and the entries of $y_3$ depends on $y_1, y_2$ thorugh~\eqref{eq:relation_of_ys}. 

The conclusion of Proposition~\ref{prop:step3uniqueness} follows from
the following more general statement. In what follows, we use
$\tilde{z}$ to denote the reflected and conjugated version of
$z\in\mathbb{C}^{P}$, that is, $\tilde{z}[n]=\overline{z[P-n]}$.

\begin{propositionApp} \label{prop:abstractstep3}
	Let  $y_1\in \C^{L+1}, y_2\in C^{2L+1}$ and $y_3\in \C^{3L+1}$ be generic vectors satisfying~\eqref{eq:relation_of_ys}. Then,  
	any solution $y'_1, y'_2, y'_3$ to the system of equations
	\begin{align*}
	\{y'_1[n]y'_3[L+n]&= y'_2[n]y'_2[L+n]\}_{n=0,\ldots , L}\\
	\left\vert \hat{y'}_1(\eta)\right\vert^2, &= \left\vert\hat{y}_1(\eta)\right\vert^2, \\
	\left\vert\hat{y'}_2(\eta)\right\vert^2 &= \left\vert\hat{y}_2(\eta)\right\vert^2, \\
	\left\vert\hat{y'}_3(\eta)\right\vert^2 &= \left\vert\hat{y}_3(\eta)\right\vert^2,
	\end{align*}
	must obey $y'_1 = e^{i\theta_1} y_1$, $y'_2 = e^{i\theta_2} y_2$, $y'_3 = e^{i \theta_3} y_3$ for some $\theta_1, \theta_2, \theta_3$ or
	$y'_1 = e^{i\theta_1} \tilde{y}_1$, $y'_2 = e^{i \theta_2} \tilde{y}_2$,
	$y'_3 = e^{i \theta_3} \tilde{y}_3$.
\end{propositionApp}
\begin{proof} 
	Let $Z \subset \C^{L+1} \times \C^{2L+1} \times \C^{3L+1}$ be the subvariety defined
	the system of equations~\eqref{eq:relation_of_ys}.
	Note that $Z$ is irreducible because it is the product of $L+1$ irreducible
	hypersurfaces,  where the $n$th hypersurface satisfies
	the equation
	$y_1[n] y_3[n+L]= y_2[n] y_2[n+L]$ in the $\C^4$ with coordinates
	$(y_1[n], y_3[n+L], y_2[n], y_2[n+L]$
	and $n = 0, \ldots , L$.
	
	Notice also that $Z$ is invariant under the action of the group of ambiguities $S^1 \times S^1 \ltimes \mu_2$, where $e^{i\theta_1}, e^{i\theta_2}$ acts
	on $(y_1,y_2,y_3) \in \C^{L+1} \times \C^{2L+1} \times \C^{3L+1}$
	by $(e^{\I\theta_1}y_1, e^{i\theta_2}y_2, e^{2\I\theta_2 - \I\theta_1}y_3)$ and $-1 \in \mu_2$ acts by taking $(y_1, y_2, y_3)$ to $(\tilde{y}_1, \tilde{y}_2, \tilde{y}_3)$.
	The action of $S^1 \times S^1 \ltimes \mu_2$ also preserves the Fourier intensity of each vector. 
	
	Let $X$ be the quotient of $\C^L \times \C^{2L} \times \C^{3L}$ and let $H$ be the quotient of $Z$ under the action of this group. An
	element of $X$ is an equivalence class of triples $(y_1, y_2, y_3)$
	modulo this group of ambiguities and $H$ is the subvariety
	of equivalence classes that satisfy the relations~\eqref{eq:relation_of_ys}.
	
	Consider the
	subvariety $W \subset H \times X$ of pairs of equivalence classes
	$(y_1,y_2,y_3), (y'_1, y'_2,y'_3)$,
	where $y_\ell$ and $y'_\ell$ have the same Fourier intensity function
	for $\ell = 1,2,3$.
	Let $I_W$ be the intersection of $W$ with the subvariety $H \times H$
	of $H \times X$. Both $W$ and $I_W$ clearly contain the diagonal
	(that is, the set of pairs of equivalence classes $\left((y_1,y_2,y_3),
	(y_1,y_2,y_3)\right)$ with $(y_1, y_2, y_3) \in H$).
	
	The projection $W \to H$ (of taking the first triplet) is
	finite because, modulo trivial ambiguities, there are only a
	finite number of vectors with the same Fourier intensity
	function of a given vector (For further discussion of the geometry of related varieties, see~\cite{edidin2018geometry}).
	Let $W_0$ be the closure of the
	complement of the diagonal in $W$ and let $I^0_W$ be the
	closure of the complement of the diagonal in $I_W$.
	
	The projection map $W_0 \to H$ is also finite and
	the assertion of the proposition is equivalent to the statement that
	the image of $I^0_W$ under the projection  is contained in a
	proper (i.e., its complement is dense) subvariety of $H$.
	To see that, note that if
	the equivalence class of $(y_1, y_2, y_3)$ is in the complement of the
	image of $I^0_W$, then if $(y'_1, y'_2,y'_3)$ is a triplet with same
	Fourier intensity function which also satisfies the equations
	\begin{equation} \label{eq:app_sys_equations}
	y'_1[n] y'_3[n+L] = y'_2[n] y'_2[n+L], \quad, n=0,\ldots,L,   
	\end{equation}
	we must have that
	$(y'_1,y'_2, y'_3)$ is obtained from $(y_1, y_2, y_3)$ by an action of
	$S^1 \times S^1 \ltimes \mu_2$.
	
	To this end, it suffices to show that the image of $I_W^0$ is not all of
	$H$. We will show this with an explicit example.
	Let $y_1 = (1, 0, \ldots , 0, 2) \in \C^{L+1}$,
	and let $y_2$ be the vector with $y_2[0] = y_2[L]=y_2[2L] = 1$
	and $y_2[n] = 0$ otherwise.
	Then, the set of vectors
	$y_3 \in \C^{3L}$ 
	that satisfy the relations
	$y_1[n]y_3[n+L] = y_2[n]y_2[n+L]$ is the $(3L-1)$-dimensional linear subspace
	of vectors satisfying 
	\begin{equation} \label{eq:y3}
	y_3[L] = 1 \mbox{ and } y_3[2L] = 1/2.
	\end{equation}
	
	The Fourier transform of $y_2$  is $1 + \eta^L + \eta^{2L}$.  The roots of such a polynomial (called cyclotomic polynomial), are on the unit circle and are given by the $3L$ root of unity.
	Therefore, any vector with the same Fourier intensity must be obtained from $y_2$ by a trivial ambiguity; i.e., multiplication by $e^{i \theta}$ or conjugation and reflection.
	
	The Fourier transform of $y_1$ is the function $1 + 2\eta^L$. Thus, its Fourier intensity function is given by $B(\eta)= 2 \eta^{-L} + 5 + 2 \eta^L$. 
	We also note that  if $y'_1$ is any vector with Fourier intensity function
	$B(\eta)$, then it must have at least 3 nonzero entries unless
	it is of the form $e^{i \theta} y_1$ or $e^{i\theta} \tilde{y}_1$.
	Hence, if $y'_1$ is not equivalent to $y_1$ under the group of trivial
	ambiguities, the solution space to the equations~\eqref{eq:app_sys_equations}
	must have dimension at most $3L-2$, which is inconsistent with~\eqref{eq:y3}.
	This implies that 
	for generic $y_3$ in the linear subspace above there is no non-equivalent $(y'_1, y'_2, y'_3)$ with $y'_3$ having the same Fourier intensity function as $y_3$ and also satisfying the equations~\eqref{eq:app_sys_equations}. 
\end{proof}

\end{document}